\newcommand{\SP}{\mathcal{S}_{\mathcal{P}}}
\newcommand{\PP}{\mathcal{P}}
\newcommand{\vset}[1]{\perp_\mathcal{#1}}
\newtheorem{problem}{Problem}
\newtheorem{remark}{Remark}
\newtheorem{definition}{Definition}
\newtheorem{theorem}{Theorem}
\newtheorem{lemma}[theorem]{Lemma}
\newtheorem{corollary}[theorem]{Corollary}
\title{Modeling energy collection with shortest paths in rectangular grids: an efficient algorithm for energy harvesting}
\author{ 
\name{José-Miguel Díaz-Bañez\textsuperscript{a} and José Manuel Higes\textsuperscript{a} and Miguel-Angel Pérez-Cutiño\textsuperscript{a,b}\thanks{CONTACT M.A. Pérez-Cutiño. Email: m.perez@virtualmech.com} and Tom Todtenhaupt\textsuperscript{b}
}
\affil{\textsuperscript{a}Department of Applied Mathematics II, University of Seville, Seville, Spain; \textsuperscript{b}Virtualmechanics S.L, Seville, Spain}
}
\date{}
\begin{document}

\maketitle

\begin{abstract}
Parabolic Trough solar fields are among the most prominent methods for harnessing solar energy. However, continuous sun-tracking movements leads to wear and degradation of the tracking system, raising the question of whether the rotations can be minimized without compromising energy capture. In this paper, we address this question by exploring two problems: (1) minimizing the number of SCA rotational movements while maintaining energy production within a specified range, and (2) maximizing energy capture when the number of rotations is limited. 
Unlike prior work, we develop a general framework that considers variable conditions. By transforming the problem into grid-based path optimization, we design polynomial-time algorithms that can operate independently of the weather throughout the day.
Through realistic simulations and experiments using real-world data, our methods show that rotational movements of solar trackers can be reduced by at least 10\% while maintaining over 95\% energy collection efficiency. These results offer a scalable solution for improving the operational lifespan of the solar field. Furthermore, our methods can be integrated with solar irradiance forecasting, enhancing their robustness and suitability for real-world deployment.
\end{abstract}

\begin{keywords}
OR in Energy; Combinatorial Optimization; Solar tracking; Rectangular grids; CSP plants.
\end{keywords}

\section{Introduction}
\label{sec:intro}

The production of green energy has gained increasing attention in recent years. Many variables are interlaced in the energy production process, energy collection being one of the most important. In the context of solar energy, solutions range from static photovoltaic panels to fields of mirrors arranged to concentrate the solar energy up to some specific location; the latter are known as Concentrated Solar Power (CSP) plants. In this work we focus our attention on one particular type of CSP plant: Parabolic Trough (PT) solar fields. 

PT plants are composed of a set of parabolic-shaped mirrors or Solar Collector Elements (SCE) concentrating the solar energy into cylindrical pipes, or Heat Collector Element (HCE). A set of SCEs (typically of size twelve) is known as Solar Collector Assembly (SCA), and they are designed to track the sun with the use of robotic arms; see Figure \ref{fig:bj_good_bad}(a). Moving heavy mechanical structures such as the SCAs over long periods of time results in the deterioration of the robotic arms which at some point are not able to expand or rotate, being the root cause of catastrophic events; see Figure \ref{fig:bj_good_bad}(b). To avoid this, our work focuses on answering the following question: 
\textit{Is it possible to reduce the rotations of a sun-tracking system while maintaining high energy collection?}

\begin{figure}
    \centering
    \begin{subfigure}{.48\textwidth}
    \centering
        \includegraphics[width=0.95\linewidth, height=5cm]{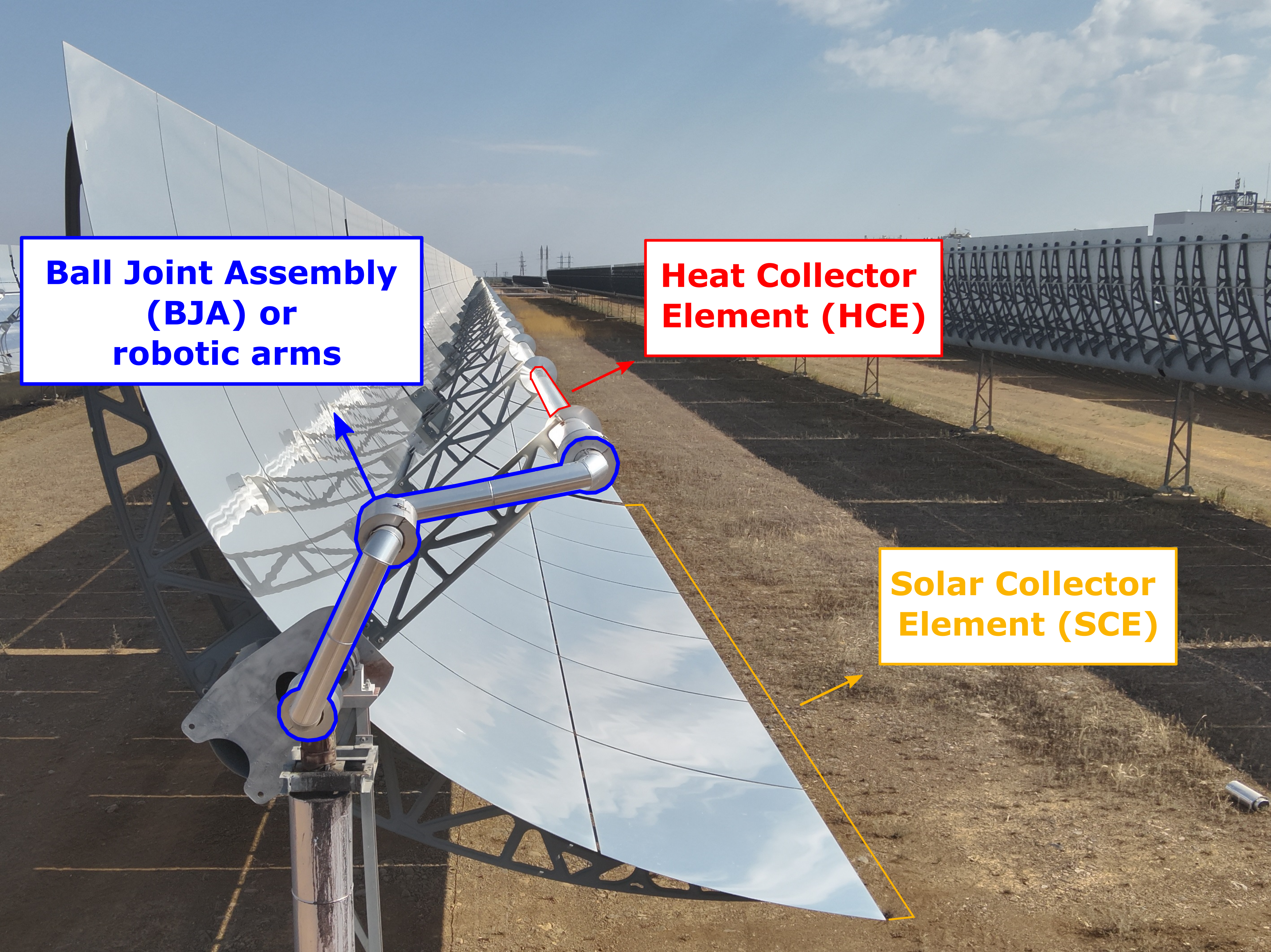}
        \caption{}
    \end{subfigure}
    \hfill
    \begin{subfigure}{.48\textwidth}
    \centering
        \includegraphics[width=0.95\linewidth, height=5cm]{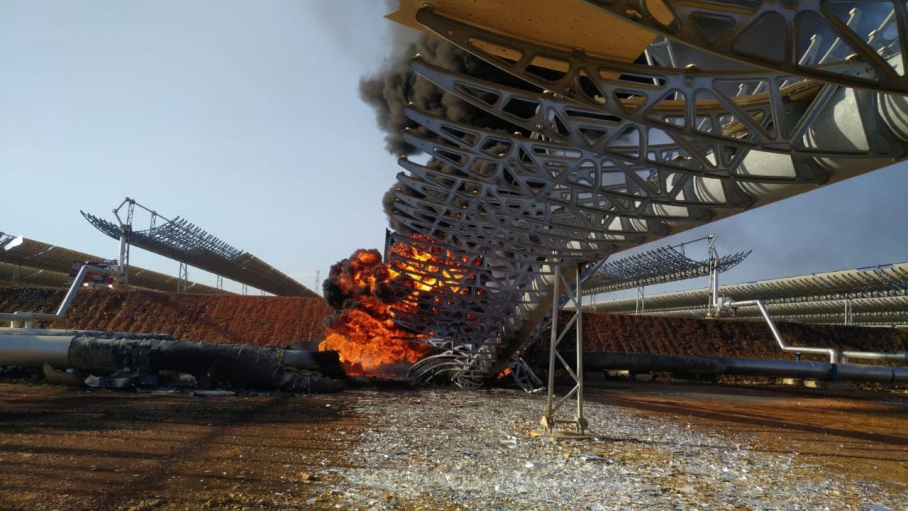}
        \caption{}
    \end{subfigure}
    \caption{(a) Main elements in a Parabolic Trough plant (b) Toxic smoke produced in an accident caused by Ball Joint Assemblies in poor condition [image extracted from \cite{noticia2022Casablanca}].}
    \label{fig:bj_good_bad}
\end{figure}

The literature concerning the problem of reducing the number of rotations of a solar tracking system (STS) is scarce. Most of the existing works concentrates on perfect tracking, which is achieved by aligning the sun with the STS by using solar equations \cite{fernandez2017mathematical} or solar sensors \cite{sabir2016optimal,sneineh2019design}. Perfect tracking is mandatory for optimal energy collection, but it discards the effect of moving heavy structures for long periods of time. The work of \cite{diaz2024optimal} proposes a solution to this problem. 
However, in their research the initial tracking conditions and the azimuth angle are considered constant throughout the day, which is not suitable for practical applications. Additionally, they studied some problems for which non-polynomial time algorithms are known. In this work, we demonstrate that our target question can be answered in polynomial time independently of the weather conditions or the azimuth angle. In fact, we proof that scheduling a solar tracker is equivalent to designing optimal routes in rectangular grids if the solar irradiance function is known beforehand. Reducing continuous problems to structured grids or to rectangular areas is a technique widely employed in the literature; see for instance \cite{keshavarz2023finding} and \cite{mari2024shortest}. 
Our idea is supported by the fact that, in PT plants, solar rays are concentrated onto a receiver with radius $r>0$, which implies the existence of a short time window (acceptance angle) during which energy collection remains approximately constant; see Figure \ref{fig:sce_2d_cut}.
Moreover, solar irradiance in specific locations can be predicted with the use of forecasting methods 
that can be coupled with our algorithms to provide a safe solar tracking.

We aim to address two key challenges in energy capture for Parabolic Trough plants: 1) minimizing the number of rotational movements of the SCA while maintaining the production within a given range, and 2) maximizing the energy captured by a STS when the number of rotations is upper bounded. Our main contributions can be summarized as:
\begin{itemize}
  \item    We demonstrate that any optimal solution to problems 1 and 2, in a general setting, can be transformed into an equivalent version restricted to a rectangular grid.
  Thus, a contribution of the paper is the geometrical modeling of an interesting problem for the industry.
    \item We provide optimal solutions for problem 1 and 2 independently of the weather conditions or the azimuth angle during the day. Our algorithms run in polynomial time with respect to the grid size.
    \item We present extensive experiments with the use of realistic simulations and real world data. Our results evidence that the number of rotations of solar trackers in PT plants can be reduced at least by a $10\%$ while keeping the energy collection above the $95\%$. Our methods are robust and can be coupled with forecasting methods that predicts only a few hours ahead.
\end{itemize}

The rest of the paper is organized as follows: Section \ref{sec:rw} comprises the main results in the literature concerning optimal energy collection in Concentrated Solar Power plants, Section \ref{sec:problem_definition} formally introduces our target problems, including their equivalence to path optimization in rectangular grids, Section \ref{sec:algorithms} details the algorithms designed for solving these problems, and Section \ref{sec:experiments} includes our experimental validation. Finally, Section \ref{sec:conclusions} summarizes the main results of our research and proposes future research perspectives.

\begin{figure}
    \centering
    \includegraphics[width=0.75\linewidth,page=2]{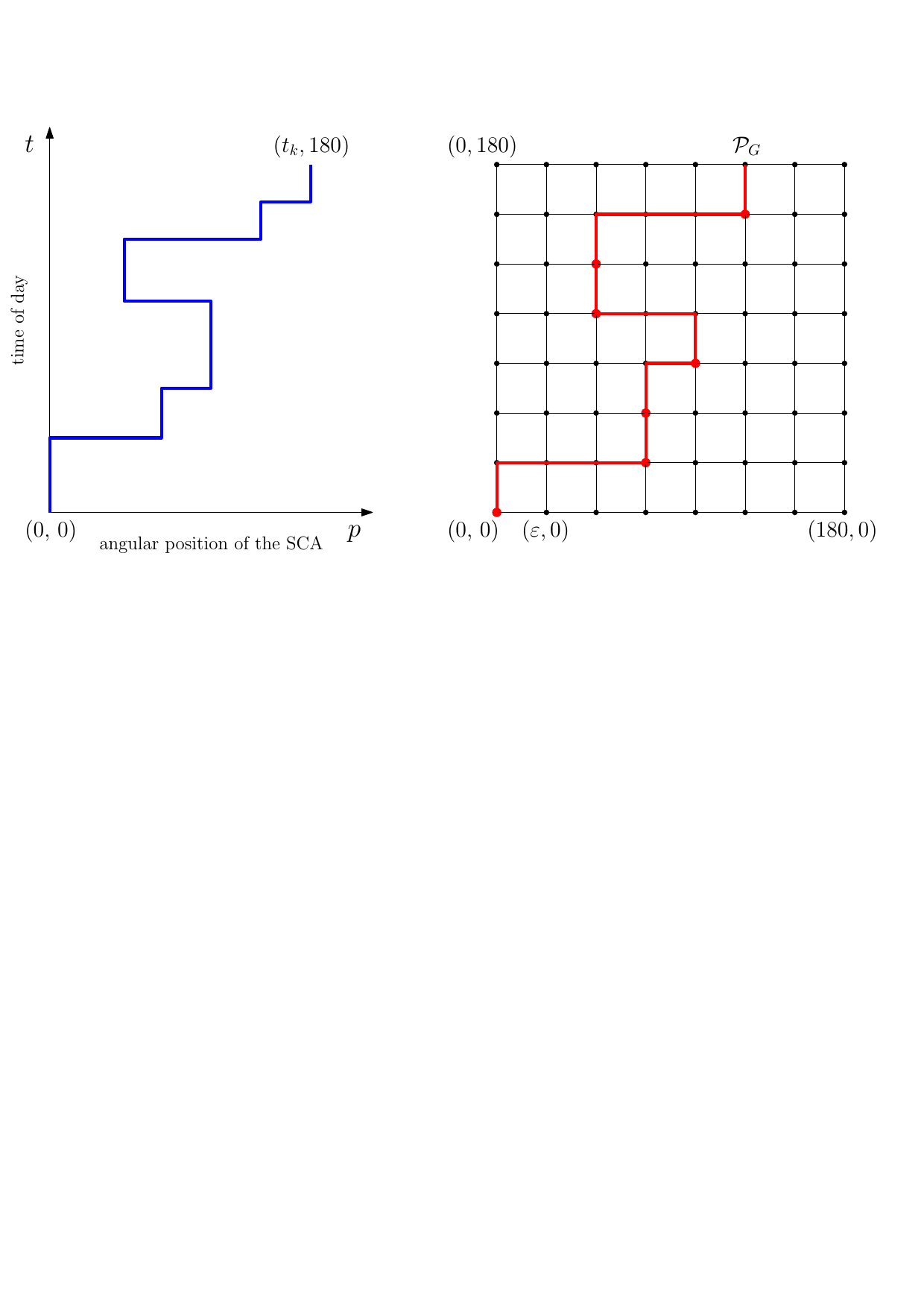}
    \caption{Two dimensional cut of a solar collector element in a Parabolic Trough solar field.}
    \label{fig:sce_2d_cut}
\end{figure}

\section{Related Work}
\label{sec:rw}

Optimizing energy collection is crucial for maximizing energy production. Numerous studies have addressed various optimization challenges related to this goal, with particular emphasis on CSP plants.

The cost of building a Solar Power Tower (SPT) plant, as well as the energy collected when distributing the heliostats in the solar field were optimized by  \cite{carrizosa2015heuristic} with greedy-based heuristics. For SPT plants, \cite{ashley2017optimisation} proposes an integer linear programming solution to the problem of optimizing the aiming point of the heliostats. In this research, it is considered that inappropriate aiming strategies can damage the central tower receiver. However, 
their target problem
does not involve moving heavy mechanical structures, hence the optimization focuses on achieving an even flux distribution across the surface of the receiver. In a similar problem, \cite{zeng2022real} corrects the position of the heliostats using Reinforcement Learning, which enables real time optimization and can be applied under varying solar conditions. Other Deep Learning-based strategies have been applied to the same problem \cite{wu2024method}. For uniform flux distribution in Linear Fresnel solar fields, optimal aiming strategies have also been designed using genetic algorithms \cite{qiu2017aiming}. Recently, \cite{andres2025solar} have introduced a collection of optimization problems based on steady-flow operation in SPT plants to benchmark blackbox solvers.
 
Energy production can be increased with the use of Model Predictive Control (MPC) for adjusting the temperature of the Heat Transfer Fluid (HTF) in Parabolic Trough solar fields \cite{gholaminejad2022stable,masero2023fast}. Regulating the fluid temperatures offers the benefit of increased robustness against disruptions, such as fluctuations in solar irradiance, or variations in mirror reflectivity \cite{camacho2012control}. In PT plants with thermal storage, MPCs have also been considered for scheduling the times where the energy is sold to the grid \cite{velarde2023scenario}.
In a similar problem, but focusing on 
photovoltaic systems with energy storage
\cite{jakoplic2021benefits} consider short forecasting methods to predict the expected power production based on ground cameras.
The aforementioned techniques are not dependent of the rotations of the system, hence they can be coupled with different tracking strategies, such as the ones described in this work. For a general review on mathematical optimization applied to optimal operation of solar thermal plants, see the work of \cite{untrau2022analysis}.

An area of active research is the design of tracking mechanisms. Depending on the number of tracking axes, the Solar Tracking System is classified as a single-axis or a dual-axis tracker. The later has the advantage of capturing more solar radiation; hence several efforts have been devoted to improve such a system \cite{tina2013intelligent,sabir2016optimal,pirayawaraporn2023innovative}. However, dual-axis trackers involves more complicated designs and higher cost which prevents their use in practical applications \cite{kong2020optimal}. Indeed, in commercial PT plants single-axis trackers are commonly used; see examples in \cite{national2003assessment}. For single-axis tracking, conventional control algorithms focus on a perfect alignment of the sun with the tracking system for maximizing the collection of direct irradiance. This is usually achieved with the use of solar equations \cite{fernandez2017mathematical} or solar sensors \cite{sabir2016optimal,sneineh2019design}.  Other control algorithms integrate the collection of diffuse irradiance \cite{anderson2022single} or consider the impact of partial shading on consecutive rows \cite{panico1991backtracking,gomez2020analysis}. Although perfect tracking is important for energy collection, the abuse of rotational moves of heavy mechanical structures can hinder its components, which is not considered in the previous works. Moreover, PT plants operates with restriction in the total temperature than can be reached by the HTF. When this temperature is too high, the STS is manually set to an out-of-focus state to prevent overheating. Instead, solar tracker considering the rotational movements of the system, or the total energy that needs to be captured might help in planning better production strategies. 

The work of \cite{diaz2024optimal} proposes a solution to our target problems, but under the assumption that the total irradiance remains constant throughout the day.
This idea allows to reduce one dimension of the problems, but is only valid for limited scenarios. In our work, we consider any weather condition, which is more suitable for practical applications. Moreover, our experiments use real meteorological data from 2019, and include the adaptation of our algorithms to short-term forecasting.

\section{Problem Definition}
\label{sec:problem_definition}

Our target problems are defined in Section \ref{sec:intro}. To articulate these general statements more formally, we begin by defining the representation of the irradiance function. Subsequently, we show how to solve our problems through 
combinatorial optimization
on rectangular grids.

\subsection{Optimization problems}

Let $f: X\times Y\rightarrow \mathbb{R}$ be a function representing the solar irradiance captured by the collector system depending on the angular value $x\in X$ of the SCA, and the angular value $y\in Y$ of the sun, both with respect to the horizontal plane.
From a practical constraint, we get 
$X = Y=[0, 180]$;
see Figure \ref{fig:f_example}(a) for an example of $f$ restricted to a smaller interval.

\begin{figure}
    \centering
    \begin{subfigure}{0.46\textwidth}
        \includegraphics[width=\textwidth]{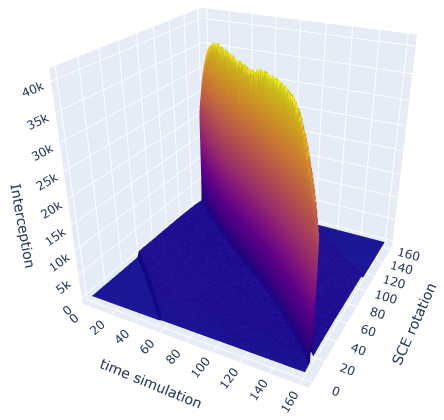}
        \caption{}
    \end{subfigure}
    \hfill
    \begin{subfigure}{0.42\textwidth}
        \includegraphics[width=\textwidth]{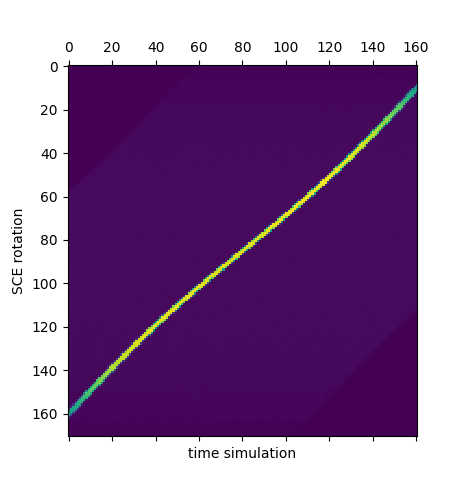}
        \caption{}
    \end{subfigure}
    \caption{Example of the irradiance function. (a) 3D representation, (b) matrix representation. }
    \label{fig:f_example}
\end{figure}

\begin{definition}
\label{def:GCP}
Given a solar irradiance function $f: X\times Y\rightarrow \mathbb{R}$, we will say that a sequence of points 
$\{P_i= (p_i, t_i)\}_{i = 1}^m \subset X\times Y$ is a generalized collector path (GCP) if: 
\begin{itemize}
\item[a.] $P_1 = (0, 0)$ and $t_m = 180$.
\item[b.] For every $1 < i < m$:
\begin{enumerate}
    \item $P_i \ne P_{i-1}$ and $P_i \ne P_{i+1}$.
    \item $p_i = p_{i+1}$ or $p_i = p_{i-1}$ but not both.
    \item $t_i = t_{i+1}$ or $t_i = t_{i-1}$ but not both.
    \item $t_i\ge t_{i-1}$.
\end{enumerate}
\end{itemize} 
An increasing collector path (ICP) is defined as a GCP that also satisfies $p_{i} \le p_{i+1}$ for every $1 \le i < m$. This means that the path does not turn to the left. The trajectory of a path is determined by its sequence of points, which specify positional and directional transitions. \par
\end{definition}

Any GCP, $\PP$, can be visualized as a snake-shaped path such that no decreasing movements in the $y$ coordinate are allowed; see Figure \ref{fig:examplepath} (left). This definition is supported by the practical application: 1) the $y$ coordinate represents the sun displacement, hence it cannot decrease; 2) 
the SCA rotates in discrete time steps,
hence when the SCA is static (moving) the corresponding portion of $\PP$ is vertical (horizontal). On the other hand, in an ICP no left movements would be allowed,
implying that the SCA only rotates forward. Notice that when $t_i = t_{i+1}$ the collector has moved from angle $p_i$ to angle $p_{i+1}$, while if $p_i = p_{i+1}$, then the collector is static collecting energy during the period of time $[t_i, t_{i+1})$. The set of points of $X \times Y$ in which the collector is static will be denoted by $\SP$. Note that $\SP$ is composed by 
all
points in vertical segments of $\PP$, excluding the upper point of each segment. 

\begin{definition}\label{Definition: Energy collected}
Given a GCP, $\PP = \{(p_i, t_i)\}_{i = 1}^m$, the energy collected by $\PP$, $E(\PP)$, is defined as:
\begin{equation}
\notag
E(\PP) = \sum_{i = 1}^m \int_{t_i}^{t_{i+1}} f(p_i, y)dy
\end{equation}
\end{definition}

\begin{remark}
Notice that integrals of Definition \ref{Definition: Energy collected} are zero when the collector is not static.
\end{remark}

Given a GCP $\PP$, we could define the number $m_{\PP}$ of movements of $\PP$ as the number of times such that $p_{i} \ne p_{i+1}$. By construction of $\PP$, minimizing $m_{\PP}$ is equivalent to minimizing the number $m$ of waypoints of $\PP$, since either 
$m = 2m_{\PP} + 1$ or $m = 2m_{\PP} +2$. 
Therefore,
the problems addressed in this paper are the following: 
\begin{problem}\label{Problem: Main Problem 1}(3D Minimum Tracking Motion)
Given an irradiance function $f$, and two positive real numbers $u_1 < u_2$, find a GCP, $\PP$, such that the number of waypoints
of $\PP$ is minimum and 
for every $P_i = (p_i, t_i) \in \SP$ with $P_{i+1}=(p_i, t_{i+1})\in \PP$ $u_1 \le f((p_i, y)) \le u_2$ for every $y\in [t_i, t_{i+1})$. 
\end{problem}

\begin{problem}\label{Problem: Main Problem 2}(3D Maximum Energy Collection)
Given an irradiance function $f$, and a positive integer number $m$,   find a GCP, $\PP$, 
that maximizes the energy collection $E(\PP)$, with the constraint that the number of waypoints
does not exceed $m$.
\end{problem}

\begin{figure}
    \centering
\includegraphics[width=.95\textwidth]{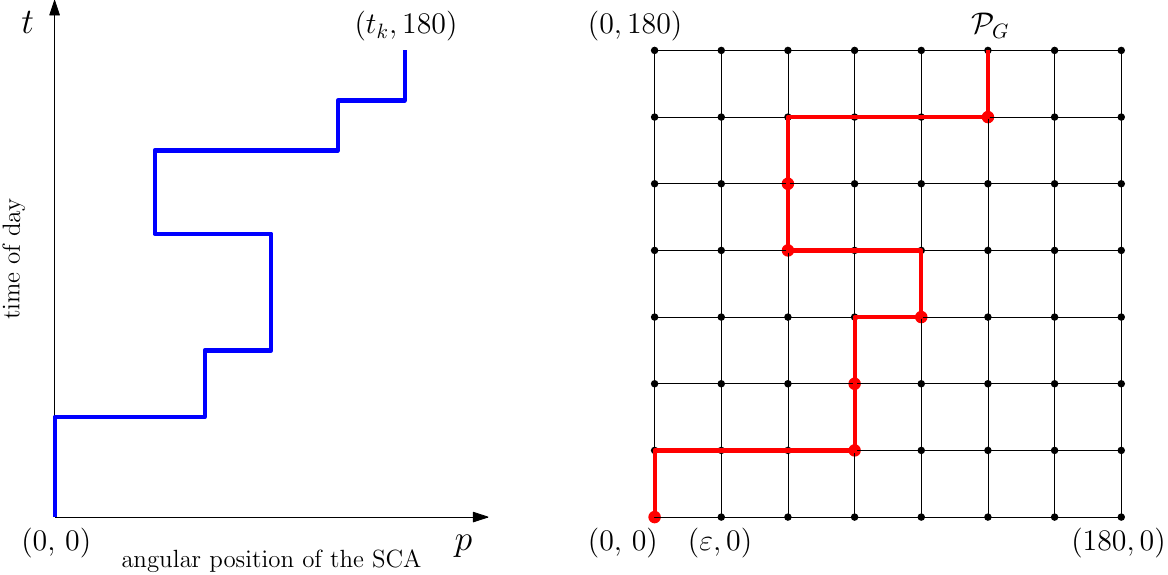}
    \caption{Left: in blue an example of a GCP. Right: in red a $\PP_G$ with 
    eight
    turn points. Red points  are vertical points (see Definition \ref{def:turn_and_vertical_points}). }
    \label{fig:examplepath}
\end{figure}

\subsection{Equivalence with grid graph problems}
In this section, we will transform the above optimization problems into optimal trajectory problems on a grid graph.

We can notice that the rays impacting the mirrors of the Solar Collector Assembly (SCA) are reflected to the absorber tube, which can be modeled as a cylinder of radius $r>0$. 
If we consider a cross-section, the SCA forms a perfect parabola, the absorber tube appears as a circle, and the sun can be treated as a point source due to its vast distance from Earth.
Since the rays impact with a tube and not with a point,
there is a short period of time $\varepsilon>0$ for which: 1) for a fixed position of the SCA the sun moves without affecting the concentration of rays in the absorber tube; 2) for a fixed position of the sun the SCA moves obtaining the same effect described before; see Figure \ref{fig:sce_2d_cut}. For simplicity of notation, we assume $\varepsilon$ to be the same for situations 1) and 2). The results described herein are not affected by this assumption. 

From the previous paragraph, we conclude that the domain $X \times Y$ can be decomposed into cells of size $\varepsilon \times \varepsilon$, where the irradiance function $f$ can be considered constant within each of these cells.
More formally, assume that $\varepsilon$ is computed (note that it will be related to the radius $r$ of the cilinder). Divide $X = [0, 180)$ in  $n$ subintervals $[x_i, x_{i+1})$ of size $\varepsilon$, and similarly divide $Y= [0, 180)$ in $n$ subintervals $[y_i, y_{i+1})$ of size $\varepsilon$. For each integer pair $(i, j)$ with $1\le i, j \le n$, the irradiance function satisfies $f(x, y) = f(x_i, y_j)$ if $(x, y) \in [x_i, x_{i+1}) \times [y_j, y_{j+1})$.

Given the previous decomposition of $X$ and $Y$ in $n$ subintervals of the form $[x_i, x_{i+1})$ and $[y_i, y_{i+1})$ respectively,
the \emph{weighted grid graph} $G = \langle V, E, W\rangle$ associated to the SCA
is the graph such that $V = \{(x_i, y_j): 
1 \le i, j \le n\}$,  $E$ is the set of vertical or horizontal segments that connect two adjacent points of $V$, and $W$ is the function $W: V \rightarrow \mathbb{R}$  defined as $W((x_i, y_j)) = f(x_i, y_j)$. For simplicity we will call the vertex as $v_{i, j} = (x_i, y_j)$ and the weights $w_{i, j} = f(x_i, y_j)$. 

\begin{definition}
Given a weighted grid graph $G$ associated to the radius of the SCA, we define a restricted generalized collector path as a GCP, $\PP_G$, such that every $(p_i, t_i)$ of $\PP_G$ is a vertex of $G$. Analogously, we can define a restricted increasing collector path from a given ICP.
\end{definition}
\begin{definition}
\label{def:turn_and_vertical_points}
    Let $\PP_G$ be a restricted generalized collector path,  and $\mathcal{S} = \{[P_i, P_{i+1})\}$ be the set of its vertical segments. We define the set of vertical points of $\PP_G$ as $\vset{\PP_G} = \bigcup [P_i, P_{i+1}) \cap V(G)$. Notice that $\vset{\PP_G}$ contains all the vertices of $G$ that lie along the vertical segments of $\PP_G$, except for the end point of each vertical segment. 
\end{definition}

Figure \ref{fig:examplepath} (right) shows an example of a restricted collector path $\PP_G$ and its vertical points (the points doted in red). Notice that vertical points lie within vertical segments of $\PP_G$, but only the start and the end points of the segments belong to the path.

We are interested in reducing our problems to these ones:
\begin{problem}
\label{problem:min_mov}
    (3D Minimum Tracking Motion for grids, or 3D-MTM): Given two real numbers $u_1, u_2$ and a weigthed grid graph $G$, find a $\PP_G$ such that: 
    \begin{itemize}
        \item For every $v_{i,j}\in\vset{\PP_G}$, $u_1\le w_{i,j}\le u_2$.
        \item $\PP_G$ is of minimum cardinality.
    \end{itemize}
\end{problem}

\begin{problem}
\label{problem:max_energy}
    (3D Maximal Energy Collection for grids, or 3D-MEC): Given an integer $m$ and a weigthed grid graph $G$, find a path $\PP_G$ such that:
    \begin{itemize}
        \item The cardinality of $\PP_G$ is at most $m$.
        \item $\sum_{v_{i,j}\in \vset{\PP_G}}w_{i,j}$ is maximal.
    \end{itemize}
\end{problem}

\begin{lemma}\label{Lemma: second energy simplification}
     Let $\PP_G$ be a restricted generalized collector path. Then: 
     
     \begin{equation}
         \notag
         E(\PP_G) = \varepsilon \cdot \sum_{v_{i,j} \in \vset{\PP_G}} w_{i,j}
     \end{equation}
\end{lemma}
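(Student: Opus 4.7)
The plan is to decompose the sum defining $E(\PP_G)$ into contributions from horizontal segments and vertical segments of the path, observe that horizontal segments contribute nothing, and then evaluate the integral along each vertical segment using the piecewise-constant structure of $f$ on the $\varepsilon \times \varepsilon$ cells.

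First I would partition the indices $1 \le i \le m$ into two classes: those with $p_i = p_{i+1}$ (a vertical segment, during which the collector is static, so $t_{i+1} > t_i$) and those with $t_i = t_{i+1}$ (a horizontal segment, during which the SCA is moving). By Definition~\ref{def:GCP}, exactly one of these equalities holds at each step. For horizontal segments the integrand is integrated over an interval of length zero, so $\int_{t_i}^{t_{i+1}} f(p_i,y)\,dy = 0$, as the remark following Definition~\ref{Definition: Energy collected} already notes. Hence
\begin{equation*}
E(\PP_G) \;=\; \sum_{[P_i,P_{i+1}) \text{ vertical}} \int_{t_i}^{t_{i+1}} f(p_i,y)\,dy .
\end{equation*}

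Next I would evaluate each such vertical integral. Fix a vertical segment $[P_i,P_{i+1})$ with $P_i = v_{k,j_1} = (x_k, y_{j_1})$ and $P_{i+1} = v_{k,j_2} = (x_k, y_{j_2})$, where $j_1 < j_2$. Since $f$ is constant on each cell $[x_k,x_{k+1}) \times [y_j, y_{j+1})$, with value $f(x_k, y_j) = w_{k,j}$, and since the subintervals have length $\varepsilon$,
\begin{equation*}
\int_{y_{j_1}}^{y_{j_2}} f(x_k,y)\,dy \;=\; \sum_{j=j_1}^{j_2-1} \int_{y_j}^{y_{j+1}} w_{k,j}\,dy \;=\; \varepsilon \sum_{j=j_1}^{j_2-1} w_{k,j}.
\end{equation*}
By Definition~\ref{def:turn_and_vertical_points}, the vertical points contributed by this segment are exactly the vertices $v_{k,j}$ with $j_1 \le j \le j_2 - 1$, because $\vset{\PP_G}$ includes the starting vertex of the segment but excludes its endpoint. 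Hence the integral equals $\varepsilon$ times the sum of the weights at the vertical points lying on that segment.

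Finally, summing over all vertical segments, every vertex of $\vset{\PP_G}$ is counted exactly once (distinct vertical segments of a GCP have disjoint half-open $y$-ranges or lie on distinct vertical lines), giving $E(\PP_G) = \varepsilon \cdot \sum_{v_{i,j}\in\vset{\PP_G}} w_{i,j}$. The only mildly delicate point is the bookkeeping of which endpoints of each vertical segment are included: the half-open convention in Definition~\ref{def:turn_and_vertical_points} is exactly what makes the Riemann-type sum line up with the integrand on $[y_{j_1}, y_{j_2})$, so no double counting or omission occurs at the junctions between consecutive vertical segments.
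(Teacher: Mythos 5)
Your proof is correct and follows the same route as the paper, which simply states that the identity follows from the definition of $E(\PP_G)$ and the constancy of $f$ on each $\varepsilon\times\varepsilon$ cell; you have merely written out the details (vanishing of horizontal contributions, cell-by-cell evaluation of each vertical integral, and the matching of the half-open convention in Definition~\ref{def:turn_and_vertical_points} with the integration ranges) that the paper leaves implicit.
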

\begin{proof}
It follows trivially from the definition of $E(\PP_G)$ 
and the fact that the irradiance function $f$ is constant in each cell.  
\end{proof}

\begin{lemma}\label{lemma: vertical simplification}
     Let $\PP_G$ be a restricted generalized collector path and let $u_1, u_2$ be two real numbers. Let 
     $\mathcal{S}_{\mathcal{P}_G}$ be the set of static points of $\PP_G$. Every 
     $P\in \mathcal{S}_{\mathcal{P}_G}$
     satisfies $ u_1\le f(P)\le u_2$ if and only if every $P' \in \vset{\PP_G}$ satisfies $ u_1\le w_{P'}\le u_2$. 
\end{lemma}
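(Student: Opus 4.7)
The plan is to prove both implications by exploiting two structural facts: (i) the vertical points $\vset{\PP_G}$ form a subset of the continuous set of static points $\SP_{\PP_G}$ (so any bound enforced on all static points automatically transfers to vertical ones), and (ii) the irradiance function $f$ is constant on each cell $[x_i, x_{i+1}) \times [y_j, y_{j+1})$, taking there the value $w_{i,j} = f(x_i, y_j)$, so the values attained by $f$ on $\SP_{\PP_G}$ are exactly the values taken by $w$ at the grid vertices lying in $\SP_{\PP_G}$.

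For the $(\Rightarrow)$ direction I would simply observe that every $P' \in \vset{\PP_G}$ is, by construction, a vertex of $G$ lying on a vertical segment of $\PP_G$ excluding its upper endpoint, and therefore belongs to $\SP_{\PP_G}$. Since $w_{P'} = f(P')$ at grid vertices, the hypothesis $u_1 \le f(P) \le u_2$ for all $P \in \SP_{\PP_G}$ gives the desired inequality.

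For the $(\Leftarrow)$ direction I would fix an arbitrary static point $P = (p_i, y)$, lying on some vertical segment $[(p_i, t_i), (p_i, t_{i+1}))$ of $\PP_G$. Because $P$ lies in the domain, there is a unique cell $[x_k, x_{k+1}) \times [y_l, y_{l+1})$ containing it, and by the constancy of $f$ on cells one has $f(P) = w_{k,l}$. Since $\PP_G$ is restricted (its endpoints are vertices of $G$), one has $p_i = x_k$ and $t_i = y_{l_i}$, $t_{i+1} = y_{l_{i+1}}$ for some indices $l_i < l_{i+1}$. The remaining task is to show $v_{k,l} \in \vset{\PP_G}$, which reduces to checking $y_l \in [t_i, t_{i+1})$.

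The main obstacle, though a mild one, is this last verification: one must rule out $y_l < t_i$ and $y_l \ge t_{i+1}$. In the first case, $l < l_i$ forces $y_{l+1} \le y_{l_i} = t_i \le y$, contradicting $y < y_{l+1}$; in the second, $y \ge y_l \ge t_{i+1}$ contradicts $y < t_{i+1}$. Hence $y_l \in [t_i, t_{i+1})$, $v_{k,l}$ lies on the vertical segment (off its upper endpoint), so $v_{k,l} \in \vset{\PP_G}$. The hypothesis then yields $u_1 \le w_{k,l} \le u_2$, and since $f(P) = w_{k,l}$, the bound transfers to $P$, completing the proof.
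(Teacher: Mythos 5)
Your proposal is correct and follows essentially the same route as the paper, which simply asserts that the lemma ``follows directly from the definitions and the fact that the irradiance function remains constant within each cell''; you have merely filled in the details of that one-line argument (the inclusion $\vset{\PP_G}\subseteq \mathcal{S}_{\PP_G}$ for one direction, and the cell-constancy of $f$ plus the location of the cell's corner vertex on the vertical segment for the other).
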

\begin{proof}
This result follows directly from the definitions and the fact that the irradiance function remains constant within each cell.
\end{proof}

\begin{corollary}
Given a weighted grid graph $G$, then solving problem \ref{Problem: Main Problem 1} (\ref{Problem: Main Problem 2})   only for restricted generalized collector paths of $G$ is equivalent to solving problem \ref{problem:min_mov} (\ref{problem:max_energy}) for $G$. 
\end{corollary}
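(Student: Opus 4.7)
The plan is to observe that the corollary is essentially a bookkeeping exercise: once one restricts attention to paths whose waypoints lie on the grid, the two lemmas already preceding the statement translate the continuous quantities ($E(\PP)$ and the irradiance bounds on $\SP$) into the discrete quantities ($\sum w_{i,j}$ and the weight bounds on $\vset{\PP_G}$). So I would proceed by unpacking the definitions of each problem and matching constraints with constraints, objective with objective.

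For the equivalence of Problem \ref{Problem: Main Problem 1} (restricted to $G$) and Problem \ref{problem:min_mov}, I would first argue that the feasibility sets coincide. A restricted GCP $\PP_G$ is feasible for Problem \ref{Problem: Main Problem 1} if and only if $u_1 \le f(P) \le u_2$ for every $P \in \SP$; by Lemma \ref{lemma: vertical simplification} this is equivalent to $u_1 \le w_{i,j} \le u_2$ for every $v_{i,j}\in \vset{\PP_G}$, which is exactly feasibility for Problem \ref{problem:min_mov}. Then I would argue that the two objective functions coincide: the ``number of waypoints'' of $\PP_G$ as a sequence $\{P_i\}_{i=1}^m$ is precisely its cardinality as an ordered set of grid vertices, so minimizing one is the same as minimizing the other.

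For the equivalence of Problem \ref{Problem: Main Problem 2} (restricted to $G$) and Problem \ref{problem:max_energy}, the constraint matching is again the identification of ``number of waypoints'' with ``cardinality'', now as an upper bound. For the objectives, Lemma \ref{Lemma: second energy simplification} gives
\begin{equation}
\notag
E(\PP_G) \;=\; \varepsilon \cdot \sum_{v_{i,j} \in \vset{\PP_G}} w_{i,j},
\end{equation}
and since $\varepsilon>0$ is a fixed positive constant depending only on $G$, the argmax over feasible paths is unchanged when passing from $E(\PP_G)$ to $\sum_{v_{i,j}\in\vset{\PP_G}} w_{i,j}$.

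I do not expect any real obstacle here: the work has already been done by Lemmas \ref{Lemma: second energy simplification} and \ref{lemma: vertical simplification}, and the only mildly delicate point is the identification between the counting parameter $m$ of a GCP (``waypoints'') and the cardinality of its vertex sequence in $G$, which is immediate from the requirement that consecutive points of a GCP be distinct. Care should also be taken to apply each lemma in both directions (feasible $\PP_G$ for the grid problem yields a feasible GCP for the original problem, and vice versa), so that the bijection of feasible sets is made explicit before comparing optima.
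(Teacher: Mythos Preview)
Your proposal is correct and matches the paper's own argument, which simply states that the corollary follows directly from Lemmas~\ref{Lemma: second energy simplification} and~\ref{lemma: vertical simplification}. You have spelled out the bookkeeping (matching feasibility sets via Lemma~\ref{lemma: vertical simplification}, matching objectives via Lemma~\ref{Lemma: second energy simplification}, and identifying ``number of waypoints'' with cardinality) in more detail than the paper does, but the substance is identical.
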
  
\begin{proof}
It follows directly from Lemmas \ref{lemma: vertical simplification} and \ref{Lemma: second energy simplification}.
\end{proof}

\begin{lemma}\label{lemma: simplification lemma}
    Let $G$ be a weighted grid graph associated to a SCA. Then, for every GCP $\PP$ there exists a restricted generalized collector path $\PP_G$ such that the number of  waypoints of $\PP_G$ is at most the number of waypoints of $\PP$ and $E(\PP) \le E(\PP_G)$. Moreover if every static point $P\in \SP$ satisfies $u_1 \le f(P)\le u_2$ for some $u_1, u_2$ real numbers, then every static point $P' \in \mathcal{S}_{\PP_G}$ satisfies $u_1 \le f(P') \le u_2$.
\end{lemma}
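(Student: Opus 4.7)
The plan is to construct $\mathcal{P}_G$ by snapping each waypoint of $\mathcal{P}$ to a grid vertex and then collapsing any degenerate segments that arise.

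I would treat the two coordinates separately. In the $x$-direction, for each vertical segment of $\mathcal{P}$ located at column $p \in [x_i, x_{i+1})$, replace $p$ by $x_i$. Since $f$ is constant on each cell, this substitution preserves the integrand $f(p,y) = f(x_i,y)$ along every vertical segment, so the energy contribution of each segment is unchanged. In the $y$-direction, for each interior transition value $\tau \in [y_\ell, y_{\ell+1})$ shared by the top of a vertical segment $V_j$ and the bottom of the next vertical segment $V_{j+1}$, snap $\tau$ either to $y_\ell$ or to $y_{\ell+1}$. The two options move a $y$-subinterval of the cell from one segment's account to the other, and the resulting change in total energy has the form $\pm \mathrm{length} \cdot (w_{i_j,\ell}-w_{i_{j+1},\ell})$. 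One of the two signs is non-negative; I would choose that direction at every transition, which yields $E(\mathcal{P}_G) \ge E(\mathcal{P})$ by Lemma \ref{Lemma: second energy simplification}.

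After snapping, consecutive $p$-values may land in the same column (producing a degenerate horizontal segment) or consecutive transitions may land on the same grid line (producing a degenerate vertical segment). I would remove every such zero-length segment and merge the two adjacent like-oriented segments into one. Each cleanup strictly decreases the waypoint count, so the final $\mathcal{P}_G$ has at most as many waypoints as $\mathcal{P}$, and by construction satisfies the alternation and $t$-monotonicity conditions of Definition \ref{def:GCP}. For the bound condition, I would observe that the set of cells swept by each $V_j^G$ is contained in the set of cells intersected by the original $V_j$: the only cells that can be \emph{added} by the $y$-snapping are the cells already containing the transition values $\tau_{j-1}$ or $\tau_j$, and in each of these cells the original $V_j$ (or its neighbour in the snap-down case) contains a non-trivial subinterval thanks to the $P_i\ne P_{i+1}$ clause of Definition \ref{def:GCP}. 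This places a static point of $\mathcal{P}$ in each such cell, and Lemma \ref{lemma: vertical simplification} then lifts the bound from $\SP$ to $\vset{\mathcal{P}_G}$.

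The hard part is the cleanup step: confirming that after collapsing degeneracies the resulting path still satisfies all four conditions of Definition \ref{def:GCP}, especially when cleanups cascade, since collapsing a degenerate segment merges two neighbouring segments, which can in principle create a new degenerate configuration requiring further cleanup. The energy and bound-preservation arguments then reduce to routine applications of Lemmas \ref{Lemma: second energy simplification} and \ref{lemma: vertical simplification} once the snapped grid structure is in place.
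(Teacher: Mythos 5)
Your $x$-snapping step coincides with the paper's argument (its ``second transformation''), and the single-transition energy calculation is sound. The genuine gap is in the $y$-direction: your per-transition rounding is analyzed as if each transition value $\tau_j$ only redistributes a subinterval of one cell row between the two adjacent vertical segments $V_j$ and $V_{j+1}$. That accounting fails as soon as two or more transitions fall inside the same grid row $[y_\ell,y_{\ell+1})$, which a GCP freely allows (many short vertical hops within one $\varepsilon$-tall band). Concretely, take $y_\ell=0$, $y_{\ell+1}=\varepsilon=1$, and a path that sits in column $a$ on $[0,0.3)$, column $b$ on $[0.3,0.7)$, column $c$ on $[0.7,1)$, with $f$-values $10,1,10$. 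For $\tau_1=0.3$ your rule prefers snapping up to $1$, but then $V_2$ would occupy $[1,0.7)$, violating $t$-monotonicity; and the interval $[\tau_1,1)$ you credit as transferred from $V_2$ to $V_1$ in fact partly belonged to $V_3$, so the stated sign dichotomy $\pm\,\mathrm{length}\cdot(w_{i_j,\ell}-w_{i_{j+1},\ell})$ no longer describes the true energy change. You flag the cascading cleanup as the hard part, but frame it as a path-validity issue; the real failure is that the energy-monotonicity argument itself is local while the obstruction is global within a band.

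The paper's proof repairs exactly this by working band by band rather than transition by transition: after the $x$-snapping and removal of collinear points, it collects the whole subsequence $\mathcal{R}$ of waypoints lying strictly between two consecutive grid rows, observes that the vertical segments of $\mathcal{R}$ have total length exactly $\varepsilon$, and replaces them all with a single crossing of the band through the column of maximum irradiance among those segments. The maximum dominates the length-weighted average, so energy cannot decrease; the bound condition is inherited because that maximizing column already carried a static point of $\PP$ in that cell; and a counting argument ($|\mathcal{R}|\ge 4$ once collinear points are gone, while at most four new points are inserted) controls the number of waypoints. To salvage your route you would need to replace the independent up/down choice by a joint choice over all transitions sharing a row, which essentially reproduces the paper's per-band maximization.
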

\begin{proof}
    We will find $\PP_G$ by applying a series of transformations to $\PP$ that will neither increase the number of waypoints in $\PP$ nor decrease the value of
    $f(P)$ at each static point.
    Let $\PP = \{P_i= (p_i, t_i)\}_{i = 1}^m$. \par
    \emph{First transformation:} If the last two points $P_{m-1}$ and $P_m$ satisfy $t_{m-1} = t_m = 180$, then remove the point $P_m$ from $\PP$. It is clear that this transformation satisfies the conditions of the Lemma.\par
    \emph{Second transformation:} For every pair of two consecutive points $P_i = (p_i, t_i)$ and $P_{i+1} = (p_{i+1}, t_{i+1})$ such that $t_{i+1} > t_i$, compute the maximum $x_j$ coordinate of the grid such that $x_j \le p_i = p_{i+1}$. Remove the points $(p_i, t_i)$ and $(p_{i+1}, t_{i+1})$ from $\PP$ and add the points $P_i'= (x_j, t_i)$ and $P_{i+1}'= (x_j, t_{i+1})$ in case such points were not previously on $\PP$. Clearly, this transformation would not increase the number of points and it will satisfy the conditions of the Lemma, as $f$ is constant in each cell of the grid, i.e. any point $P$ in the vertical segment from $P_i$ to $P_{i+1}$ would have a corresponding $P'$ in the border of the cell with $f(P) = f(P')$.\par 
    
    \emph{Third transformation:} The third transformation is oriented to remove unnecessary points of $\PP$ after the previous processing. Let $P_{i-1}, P_i, P_{i+1}$ three colinear points of $\PP$, then remove $P_i$. 
    
Notice that after these three transformations, all the points of the new $\PP$ are in vertical lines of the grid; moreover every vertical line that contains points does not have neither three consecutive points nor isolated points, i.e. all points in a line can be grouped in pairs of consecutive points. In addition, if a point $(p_i, t_i)$ of $\PP$ is on the graph, and the previous one $(p_{i-1}, t_{i-1})$ (or the following one) is at the same row, i.e $t_{i-1} = t_i$, then $(p_{i-1}, t_{i-1})$ is also on the graph. \par

\emph{Fourth transformation:} 
We will outline the key idea behind this transformation. The transformation proceeds sequentially in a backward direction, starting from the latest point in $\PP$ that does not belong to $G$.

After the previous transformations, we have that $(p_m, t_m)$ is a vertex of $G$ and $(p_{m-1}, t_{m-1})$ satisfies $p_m = p_{m-1}$ and $t_m > t_{m-1}$. Let $1 < i< m$ be the greatest  $i$ such that $P_i$ is not a vertex of $G$ but  $P_j \in G$ for every $j> i$. As noted above, we have $t_i< t_j$ for all $j> i$, and $p_i=p_{i+1}$. 
Now, $P_i$ and $P_{i-1}$ are between two rows of $G$, say $k$ and $k+\varepsilon$. Take the sequence $\mathcal{R}=\{P_{i-l}, \dots, P_i\}\subset\PP$ of all points of $\PP$ that lie between such rows of $G$. We notice that this sequence defines a collection of vertical segments $ \mathcal{V} = \{v_1, v_2, \cdots, v_s\}$ for some $s$, such that the sum of the lengths of these segments is exactly $\varepsilon$ (see Figure \ref{fig:lemma5fourthtansformation}).

\begin{figure}
    \centering
        \includegraphics[width=0.8\linewidth]{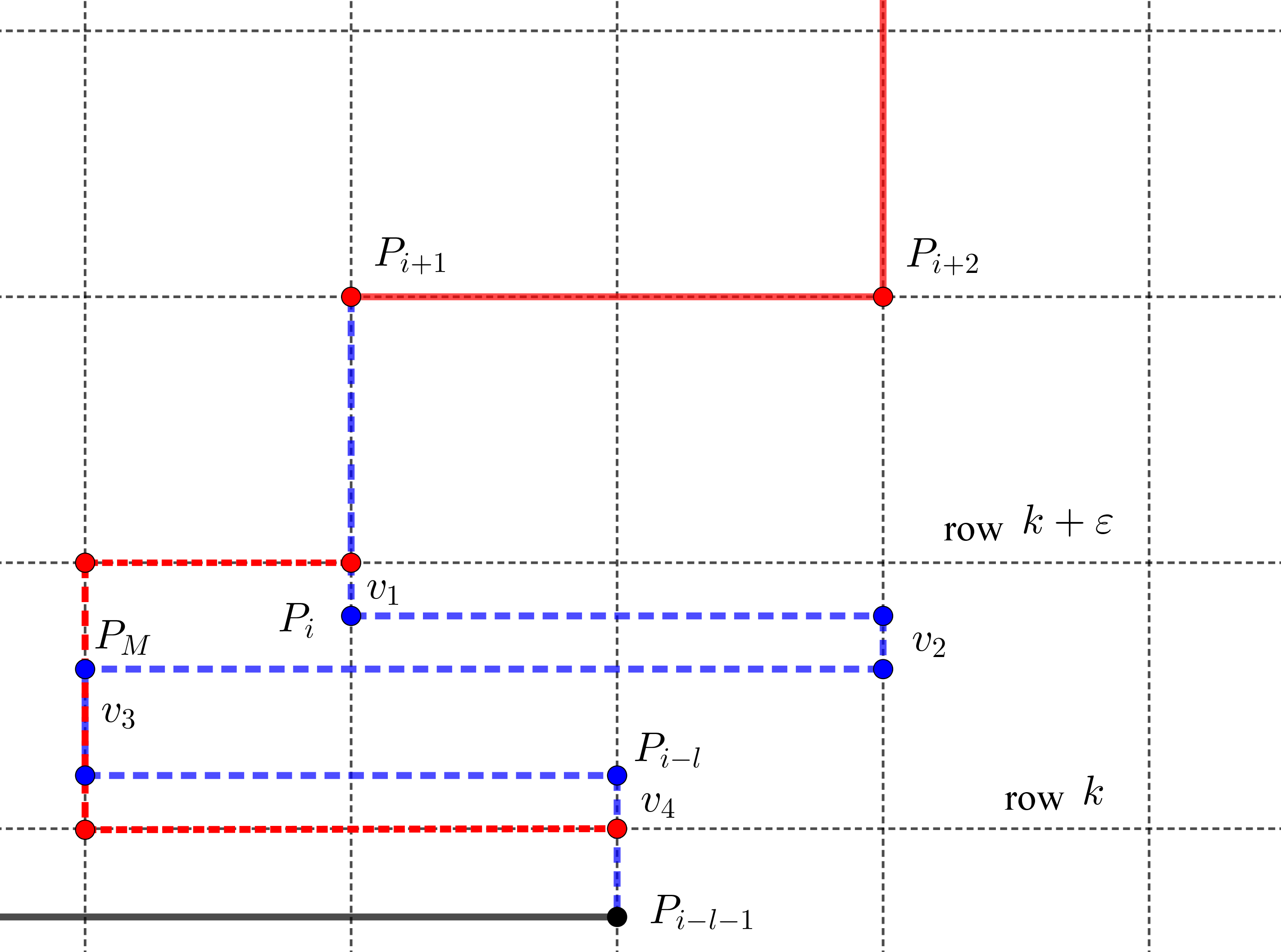}
        \caption{Example of fourth transformation. Blue points define the sequence $\mathcal{R}$. $v_3$ is the vertical segment with maximum irradiance function among $\{v_1, v_2, v_3, v_4\}$. Therefore we modify $\PP$ by removing all the blue points from $\PP$ and adding the new red ones.}
    \label{fig:lemma5fourthtansformation}
\end{figure}
 The idea of the transformation is to modify the sequence $\mathcal{R}=\{P_{i-l}, \dots, P_i\}\subset\PP$ in such a way we cross the rows $k+\varepsilon$, $k$ completely via the vertical segment $v_{\max} \in \mathcal{V}$, where $v_{\max}$ correspond to the segment with maximum irradiance function among all of the segments of $\mathcal{V}$. This transformation will not increase the number of waypoints in the path. Additionally, it will not decrease the value of the irradiance function at any point, thereby maintaining the desired energy collection. 
 There are various cases to consider based on the position of the vertical segment and whether
 $P_{i+1}$  
 is in the line $k+\varepsilon$ or not. 
 For clarity, we will present the formal analysis for one specific case, while the other cases are left for the reader to explore.

If $|R|=2$ then the modifications are trivial. In other case, $|\mathcal{R}|\ge 4$ as there are not three collinear points in $\PP$ (see the third transformation). Now,
let $P_M\in \mathcal{R}$ such that $f(P_M)=\max_{P_r \in \mathcal{R}} f(P_r)$. Notice that, if every static point $P \in S_{\PP}$ satisfies $u_1 \le f(P) \le u_2$, then  $u_1 \le f(P_M)\le u_2$. Assume that $P_M \ne P_i$ and $P_M \ne P_{i-l}$. Let $P_M=(p_M, t_M)$. Then remove $\mathcal{R}$ from $\PP$. If $k+\epsilon = t_{i+1}$, remove $(p_{i+1}, t_{i+1})$ from the path and instead add the new points $(p_M, t_{i+1})$, $(p_M, k)$ and $(p_{i-l}, k)$. If $k+\epsilon< t_{i+1}$,  add $(p_i, k+\epsilon)$, $(p_M, k+\epsilon)$ $(p_M, k)$ and $(p_{i-l}, k)$ (this last case is shown in Figure \ref{fig:lemma5fourthtansformation}). 
We have 
$|\mathcal{R}|\ge 4$ but we have added at most four points; hence, 
we have not increased  the number of waypoints in $\PP$, the value of the irradiance function has not decreased and the Lemma follows.
\end{proof}

\begin{theorem}
    Solving problems \ref{Problem: Main Problem 1} and \ref{Problem: Main Problem 2} 
    is equivalent to solving problems \ref{problem:min_mov} and \ref{problem:max_energy}, respectively.
\end{theorem}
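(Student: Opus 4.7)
The plan is to deduce the theorem almost directly from the two preceding results: the Corollary, which says that restricting Problems \ref{Problem: Main Problem 1} and \ref{Problem: Main Problem 2} to restricted generalized collector paths of $G$ is equivalent to solving Problems \ref{problem:min_mov} and \ref{problem:max_energy}, and Lemma \ref{lemma: simplification lemma}, which says that any GCP can be transformed into a restricted GCP without worsening the number of waypoints, the collected energy, or the feasibility with respect to $u_1, u_2$. So the theorem is essentially a bookkeeping step combining these two results, and I do not expect a genuinely hard part; the real work was absorbed into Lemma \ref{lemma: simplification lemma}.

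Concretely, I would argue each of the two equivalences by a pair of inequalities. For Problem \ref{Problem: Main Problem 1} versus Problem \ref{problem:min_mov}: any restricted GCP is in particular a GCP, and by the Corollary its feasibility with respect to $u_1, u_2$ is preserved, so the optimal value of the grid version is at least the optimal value of the continuous version. In the other direction, let $\PP^{*}$ be an optimal GCP for Problem \ref{Problem: Main Problem 1}; apply Lemma \ref{lemma: simplification lemma} to obtain a restricted GCP $\PP_G$ whose number of waypoints does not exceed that of $\PP^{*}$ and whose static points still satisfy $u_1 \le f(P) \le u_2$. By the Corollary, $\PP_G$ is feasible for Problem \ref{problem:min_mov}, so the grid optimum is also at most the continuous optimum, giving equality.

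For Problem \ref{Problem: Main Problem 2} versus Problem \ref{problem:max_energy}, the argument is symmetric. Any restricted GCP with at most $m$ waypoints is a GCP with at most $m$ waypoints whose energy, by Lemma \ref{Lemma: second energy simplification}, equals $\varepsilon$ times the grid objective, so the grid optimum is bounded by the continuous one. Conversely, take an optimal $\PP^{*}$ for Problem \ref{Problem: Main Problem 2} and apply Lemma \ref{lemma: simplification lemma} to produce a restricted GCP $\PP_G$ with no more waypoints than $\PP^{*}$ (hence still at most $m$) and with $E(\PP_G) \ge E(\PP^{*})$; again by Lemma \ref{Lemma: second energy simplification}, this $\PP_G$ realizes an objective value for Problem \ref{problem:max_energy} that matches or exceeds the continuous optimum, closing the loop.

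The only subtlety to watch out for is the normalization constant $\varepsilon$ that appears in Lemma \ref{Lemma: second energy simplification}: since it is a fixed positive constant, maximizing $E(\PP_G)$ is equivalent to maximizing $\sum_{v_{i,j}\in\vset{\PP_G}} w_{i,j}$, so the equivalence holds in the sense of optimizers (not just up to a rescaling of the objective). With this observation made explicit, the theorem follows as a short corollary of the preceding lemmas, and no new technical machinery is needed.
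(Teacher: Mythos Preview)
Your proposal is correct and follows essentially the same approach as the paper: one direction is immediate because restricted GCPs are GCPs, and the other direction is handled by Lemma~\ref{lemma: simplification lemma}. Your write-up is in fact more careful than the paper's own proof, since you make explicit the role of the Corollary and the harmless normalization constant $\varepsilon$ from Lemma~\ref{Lemma: second energy simplification}.
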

\begin{proof}

Clearly, any solution for problem \ref{problem:min_mov} or problem \ref{problem:max_energy} is feasible for problem \ref{Problem: Main Problem 1} or problem \ref{Problem: Main Problem 2}, respectively. On the contrary, by Lemma \ref{lemma: simplification lemma}, any solution for problem \ref{Problem: Main Problem 1} can be transformed into a feasible solution for  problem \ref{problem:min_mov}, or any solution for  problem \ref{Problem: Main Problem 2} can be transformed into a solution with no less energy collection for problem \ref{problem:max_energy}. 
\end{proof}

\section{Algorithms}
\label{sec:algorithms}

In this section, we present a polynomial-time solution for our target problems, parameterized by the number of vertices in the grid graph \( G \). Note that a restricted generalized collector path \( \PP_G \) induces a path in \( G \) by including all vertices of \( G \) that lie along the segments defined by \( \PP_G \). Let $\PP_G'$ be such a path. 
For simplicity, we will refer to vertices along $\PP_G'$
that belong to \( \PP_G \) 
as \textit{turn points}; see Figure \ref{fig:examplepath} on the right, which shows a path with eight turn points.
Consequently, analogous versions of the previous problems can be formulated in terms of minimizing the number of turn points (or equivalently, the number of turns)
of $\PP_G'$, rather than its cardinality.
We further classify each turn point \( P \) as a \textit{left turn point}, 
a \textit{right turn point}, or a \textit{vertical turn point}, depending on whether the subsequent vertex lies to the left, 
to the right,
or in the same vertical line
of \( P \), respectively. From now on, with slight abuse of notation, we will refer to 
$\PP_G'$
as \( \PP_G \).

In the next sections we first analyze the case when no left turns are allowed in the motion of the tracking system. This setting is inspired in practical applications, and serves as an introduction to our algorithms for solving the general case. We append the letters NL (No Left) to the acronyms of our target problems when analyzing the cases where no left turn are allowed. We show that 3D-MTM and 3D-MTM-NL (alternatively 3D-MEC and 3D-MEC-NL) have the same computational complexity. Finally, we set $O=(0,0)$ as the starting position of the system.

\subsection{3D-MTM Problem}

\subsubsection{No left turns}
\label{sssection:3d-mtm-nl}

\begin{definition}
\label{def:levels}
    The set of vertices of $G$ at distance $i$ from $O$ is defined as level $i$, and denoted as $L_i$.
\end{definition}

Recall that any solution to this problem is an ordered set of vertices of $G$. As no left turns are allowed, the following lemmas are straightforward:
\begin{lemma}
\label{lem:mtm_vertex_per_level}
    In any solution to 3D-MTM-NL at most one vertex per level is used.
\end{lemma}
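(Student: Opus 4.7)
The plan is to argue that in a 3D-MTM-NL solution, the underlying vertex sequence $\PP_G'$ is monotone in both coordinates, and each unit step along it raises the graph distance from $O$ by exactly one, so the levels are visited strictly in order and at most once each.

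First I would unpack the two monotonicity constraints. From the GCP definition (condition b.4), the $y$-coordinate never decreases along $\PP_G'$, so every vertical step in the grid goes upward. The 3D-MTM-NL hypothesis forbids left turn points, which combined with $p_1 = 0$ forces the $x$-coordinate to be non-decreasing as well; i.e.\ $\PP_G'$ is (the vertex expansion of) a restricted increasing collector path. Since $\PP_G'$ lists \emph{every} grid vertex lying along the segments of $\PP_G$, every consecutive pair of vertices in $\PP_G'$ differs by one grid edge, hence by exactly $\varepsilon$ in either the $x$ or the $y$ coordinate, and in both cases the change is non-negative.

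Next I would translate this into a statement about levels. Because $G$ is a rectangular grid with only horizontal and vertical edges, the graph distance from $O=(0,0)$ to a vertex $v_{i,j}$ equals its Manhattan distance, which in grid-index units is simply $(i-1)+(j-1)$ (or $i+j$ depending on indexing). The monotonicity established above therefore means that at each step along $\PP_G'$ exactly one of the two coordinate indices increases by $1$, so the level index strictly increases by $1$ per step. Consequently, no two vertices of $\PP_G'$ can belong to the same $L_i$, which is precisely the statement of the lemma.

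I do not anticipate any real obstacle: the only subtle point is making explicit that the turn points of $\PP_G$ and the full expansion $\PP_G'$ agree on the monotonicity property (so that ``one unit per step'' applies to every vertex traversed, not only to the waypoints). Once this is noted, the argument is a one-line consequence of the grid distance being Manhattan distance.
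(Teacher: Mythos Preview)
Your argument is correct and is precisely the reasoning the paper has in mind: it states the lemma as ``straightforward'' from the no-left-turn assumption without giving a proof, and your monotonicity-in-both-coordinates plus Manhattan-distance observation is exactly what makes it so. There is nothing to add.
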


\begin{lemma}
\label{lem:one_vertex_per_level_mtm}
    In any solution 
    $\PP_G$ to 3D-MTM-NL there is exactly one vertex of $L_i$ in $\PP_G$ for every $i\in\{0,\ldots,|\PP_G-1|\}$.
\end{lemma}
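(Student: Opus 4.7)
The plan is to combine Lemma \ref{lem:mtm_vertex_per_level} (which gives the upper bound of one vertex per level) with a direct argument that the path actually reaches every level from $0$ up to $|\PP_G|-1$. The starting vertex $O=(0,0)$ lies in $L_0$, so the base case is immediate, and what remains is to track how the level changes along consecutive vertices of $\PP_G$.

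First, I would make explicit which unit moves are available in the no-left-turn setting. A GCP has non-decreasing $t$-coordinates, so vertical-down edges are disallowed; the NL restriction forbids any vertex from being followed by a neighbor strictly to its left, which on a grid path with axis-parallel edges rules out horizontal-left edges. Hence every edge of $\PP_G$ is either horizontal-right or vertical-up. The key observation is then that both of these move types increase the $\ell_1$ distance from $O$ by exactly one, i.e.\ carry a vertex of $L_j$ to a neighbor in $L_{j+1}$.

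With this in hand, a trivial induction on the position in the path shows that if $\PP_G=v_0,v_1,\dots,v_{|\PP_G|-1}$ with $v_0=O$, then $v_i\in L_i$ for every $i$. This already delivers ``at least one vertex of $\PP_G$ in each $L_i$ for $i\le |\PP_G|-1$''; coupling it with Lemma \ref{lem:mtm_vertex_per_level} upgrades the ``at least'' to ``exactly''.

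I do not anticipate a real obstacle here: the only subtle point is to justify that the forbidden edge directions are precisely left and down, which uses both the GCP monotonicity in $t$ and the NL restriction on turn points, and to note that the notational convention fixed just before the lemma (identifying $\PP_G$ with the induced vertex sequence $\PP_G'$) is what lets us speak of ``consecutive vertices'' along $\PP_G$ rather than only at turn points.
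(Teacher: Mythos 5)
Your proof is correct and matches the paper's (implicit) reasoning: the paper simply declares this lemma straightforward from the no-left-turn restriction and gives no written proof, and your argument --- each edge is right or up, each such edge advances the level by exactly one, then invoke Lemma \ref{lem:mtm_vertex_per_level} for the upper bound --- is precisely the intended justification, spelled out in full.
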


The above lemmas allow to segment a solution to 3D-MTM-NL by levels.  
For every vertex $v$ in level $i+1$ let $\texttt{h}_v, \texttt{v}_v$ indicates the minimum number of 
turns 
to reach $v$ using a horizontal or a vertical line, respectively. In addition, let $\bar{v}, v'$ be the horizontal and vertical neighbor of $v$ in $L_i$, respectively. Then:
\begin{equation}
\label{eq:hv_update}
    \texttt{h}_v = \min(\texttt{h}_{\bar{v}}, \texttt{v}_{\bar{v}}+1)
\end{equation}
\begin{equation}
\label{eq:vv_update}
    \texttt{v}_v = \begin{cases}
        \infty & u_1\nleq w_{v'}\nleq u_2 \\
        \min(\texttt{h}_{v'} + 1, \texttt{v}_{v'}) & \text{otherwise}
    \end{cases}
\end{equation}
As the base case, $\texttt{h}_O$ and $\texttt{v}_O$ are both zero.
The above formulas are readily satisfied and enable the use of dynamic programming.

From the restrictions of the problem, both values of $\texttt{h}_v, \texttt{v}_v$ are necessary to obtain an optimal path. The main reason is the different treatment to horizontal and vertical lines, as a vertex might not be the origin of a vertical segment, but it can be used for horizontal displacements. Horizontal displacements allows to connect vertices at column $i$ and vertices at column $j$, $i\ll j$, even when all vertices in between does not satisfy the threshold condition.

\begin{theorem}
\label{thm:3dmtmnl_time}
    3D-MTM-NL can be solved in $O(|V(G)|)$.
\end{theorem}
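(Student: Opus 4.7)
The plan is to establish correctness of the dynamic programming scheme implicit in (\ref{eq:hv_update}) and (\ref{eq:vv_update}), and then to bound its total running time.

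By Lemma \ref{lem:one_vertex_per_level_mtm}, any feasible 3D-MTM-NL path contains exactly one vertex of each level $L_0, L_1, \dots, L_{|\PP_G|-1}$. This furnishes a natural order of computation: I would process the vertices of $G$ by increasing level, so that when computing $\texttt{h}_v$ and $\texttt{v}_v$ for a vertex $v \in L_{i+1}$, the values at its two level-$i$ neighbors $\bar{v}$ and $v'$ are already available. The base case $\texttt{h}_O = \texttt{v}_O = 0$ reflects that no turn is needed at the origin regardless of the subsequent direction of travel.

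For correctness, I would argue by induction on the level. Suppose $v \in L_{i+1}$ is entered along the horizontal edge from $\bar{v}$. Then either the path reached $\bar{v}$ horizontally, in which case no turn occurs at $\bar{v}$ and the cost is $\texttt{h}_{\bar{v}}$, or it reached $\bar{v}$ vertically, adding one turn at $\bar{v}$ for a cost of $\texttt{v}_{\bar{v}} + 1$. Taking the minimum gives (\ref{eq:hv_update}). The vertical case is symmetric and produces (\ref{eq:vv_update}), with the added observation that when the last edge is $v' \to v$, the predecessor $v'$ becomes a vertical point of $\PP_G$ in the sense of Definition \ref{def:turn_and_vertical_points}; feasibility therefore demands $u_1 \le w_{v'} \le u_2$, which is exactly the conditional branch. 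The main subtlety is checking that recording only the direction of entry (horizontal or vertical) is a sufficient DP state; this holds because, under the no-left-turn restriction, the number of future turns along any continuation of the path from $v$ depends on the history only through this direction, so the principle of optimality applies.

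The optimum to 3D-MTM-NL is then the minimum of $\min(\texttt{h}_v, \texttt{v}_v)$ taken over all vertices $v$ in the top row of $G$, since condition (a) of Definition \ref{def:GCP} requires $t_m = 180$; the corresponding path is recovered by standard backtracking through the arg-min choices in (\ref{eq:hv_update})--(\ref{eq:vv_update}). Since each evaluation runs in $O(1)$ time and each vertex of $G$ is visited exactly once, the total running time is $O(|V(G)|)$, as claimed.
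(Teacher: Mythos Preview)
Your proposal is correct and follows essentially the same approach as the paper: process vertices level by level using the recurrences (\ref{eq:hv_update})--(\ref{eq:vv_update}), then read off the optimum from the last row and backtrack. The paper's own proof is much terser and omits the inductive correctness argument and the discussion of why the two-state (horizontal/vertical entry) DP is sufficient, so your write-up is in fact a more careful rendering of the same idea.
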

\begin{proof}
    Equations \ref{eq:hv_update} and \ref{eq:vv_update} provide the update formula for the values $\texttt{h}_v, \texttt{v}_v$ of every vertex $v\in V(G)$. Vertices are updated by levels starting with the point $O$, and the values of $\texttt{h}_v$ and $\texttt{v}_v$ are each updated exactly once in constant time. 
    Finally, all vertices in the last row of $G$, termed as $V_n$,
    can be checked to obtain $v^*$ such that $\min(\texttt{h}_{v^*}, \texttt{v}_{v^*})=\min\{\min(\texttt{h}_v, \texttt{v}_v)| v\in V_n\}$. 
    The resulting path that connects $O$ and $v^*$ is the optimal solution to the 3D-MTM-NL problem. Moreover, it can be clearly obtained in $O(|V(G)|)$ time.
\end{proof}

\subsubsection{The general case}

For the general case it must be considered that a vertex can also be reached from the left. However, 
when a vertex is reached horizontally from direction $d$, then the path must turn vertically or continue its movement in direction $d$, leading to the following remark:

\begin{remark}
\label{rmk:left_or_right_in_row}
A valid path traversing row $i<n$ of $G$ goes exclusively left, right, or neither before moving vertically.
\end{remark}

As in the previous case, we will consider $\overrightarrow{\texttt{h}_v}, \overleftarrow{\texttt{h}_v}, \texttt{v}_v$ indicating the minimum number of turns to reach vertex $v$ using a left-horizontal, a right-horizontal, and a vertical line respectively. In addition, let $V_j$ be the set of points in row $j>0$ of $G$, and consider the set of points $V_j^{'} \subseteq V_j$ 
that can be reached from row $j-1$. Then we can provide dynamic programming formulas for
$\overrightarrow{\texttt{h}_v}, \overleftarrow{\texttt{h}_v}$, and $\texttt{v}_v$ 
considering the vertices of 
$V_j'$ and $V_{j-1}$
as follows:

\begin{equation}
\label{eq:general_righthv_update}
    \overrightarrow{\texttt{h}}_{v_{i,j}} = \begin{cases}
        \infty & i \le \min\{k | v_{k, j}\in V_j^{'}\} \\
        \min(\texttt{v}_{v_{i-1,j}} + 1, \overrightarrow{\texttt{h}}_{v_{i-1,j}}) & \text{otherwise}
    \end{cases}
\end{equation}
\begin{equation}
\label{eq:general_lefthv_update}
    \overleftarrow{\texttt{h}}_{v_{i,j}} = \begin{cases}
        \infty & i \ge \max\{k | v_{k, j}\in V_j^{'}\} \\
        \min(\texttt{v}_{v_{i+1,j}} + 1, \overleftarrow{\texttt{h}}_{v_{i+1,j}}) & \text{otherwise}
    \end{cases}
\end{equation}
\begin{equation}
\label{eq:general_vv_update}
    \texttt{v}_{v_{i,j}} = \begin{cases}
        \infty & u_1\nleq w_{v_{i,j-1}}\nleq u_2 \\
        \min(\min(\overrightarrow{\texttt{h}}_{v_{i,j-1}}, \overleftarrow{\texttt{h}}_{v_{i,j-1}})+1, \texttt{v}_{v_{i,j-1}}) & \text{otherwise}
    \end{cases}
\end{equation}

An important difference from the no-left turn case is that within every row our solution demands an ordered update. However, the order is by column index so no additional computational time is required. When updating the value of $\overrightarrow{\texttt{h}}_{v_{i,j}}$ we start by the left most vertex of row $j$ that can be reached from row $j-1$. This vertex cannot be reached from the right by any other vertex in row $j$ according to Remark \ref{rmk:left_or_right_in_row}. A similar idea is used when updating the value of $\overleftarrow{\texttt{h}}_{v_{i,j}}$ starting by the right most vertex in $V_j^{'}$. In the following, we provide the update rule for the first row. Notice that $\overleftarrow{\texttt{h}}_{v_{i,0}}$ is set to $\infty$ for all vertices in row $0$ because the movement of the SCA always starts from $(0,0)$.

\begin{equation}
    \overrightarrow{\texttt{h}}_{v_{i,0}} = 0 \hspace{4mm} \forall i\in [1,\ldots, n]
\end{equation}
\begin{equation}
    \texttt{v}_{v_{i,0}} = \begin{cases}
        0 & i = 0 \text{ and } u_1 \le w_{v_{0,0}} \le u_2\\
        1 & i > 0 \text{ and } u_1 \le w_{v_{i,0}} \le u_2 \\
        \infty & \text{otherwise}
    \end{cases} 
\end{equation}

The general case can be solved with a strategy similar to the no-left turn case. Hence, the ideas applied to demonstrate Theorem \ref{thm:3dmtmnl_time} can be readily modified to prove the following statement.

\begin{theorem}
    3D-MTM can be solved in $O(|V(G)|)$.
\end{theorem}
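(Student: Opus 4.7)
The plan is to adapt the dynamic programming argument from the proof of Theorem \ref{thm:3dmtmnl_time}, now applied to the three recurrences \eqref{eq:general_righthv_update}, \eqref{eq:general_lefthv_update}, and \eqref{eq:general_vv_update} together with the base cases in row $0$. Correctness of the recurrences is justified by Remark \ref{rmk:left_or_right_in_row}: every valid traversal of a row proceeds exclusively left or exclusively right before turning vertically, so the three states $\overrightarrow{\texttt{h}}$, $\overleftarrow{\texttt{h}}$, $\texttt{v}$ fully capture the way a vertex can be entered, and each recurrence compares continuing in the same direction against turning from the orthogonal direction at a cost of $+1$. The threshold condition on $u_1, u_2$ enters only through the finite/infinite branching in \eqref{eq:general_vv_update}, exactly as in the no-left-turn case.

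The main task is to fix a processing order in which every value is available when needed, and each is computed in $O(1)$ time. I would process rows $j = 0, 1, \ldots, n$ in increasing order. For each row $j\ge 1$, first fill all $\texttt{v}_{v_{i,j}}$ using only values from row $j-1$, which are already final. Then sweep $\overrightarrow{\texttt{h}}_{v_{i,j}}$ from left to right starting at the leftmost vertex of $V_j'$, since by \eqref{eq:general_righthv_update} it depends only on $\texttt{v}_{v_{i-1,j}}$ and $\overrightarrow{\texttt{h}}_{v_{i-1,j}}$; symmetrically, sweep $\overleftarrow{\texttt{h}}_{v_{i,j}}$ from right to left starting at the rightmost vertex of $V_j'$. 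Each of the three values at each vertex is updated exactly once in constant time, giving a total of $O(|V(G)|)$.

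To recover the optimal path, I would store back-pointers during the forward pass and then scan the top row for a vertex $v^\ast$ minimizing $\min(\overrightarrow{\texttt{h}}_{v^\ast}, \overleftarrow{\texttt{h}}_{v^\ast}, \texttt{v}_{v^\ast})$, reconstructing the sequence of turn points by backtracking; this takes additional time at most linear in the length of the path, hence $O(|V(G)|)$ overall. Optimality of the returned path and the fact that its number of turn points (equivalently, by the discussion opening Section~\ref{sec:algorithms}, the cardinality of $\PP_G$) equals the true minimum then follows by induction on the row index from the standard DP optimality principle.

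The step I expect to require the most care is the precise definition and handling of $V_j'$, the subset of vertices of row $j$ that can be reached from row $j-1$. This set must be defined coherently with the recurrences, namely as those $v_{i,j}$ for which $\texttt{v}_{v_{i,j}}$ is finite, so that the left and right sweeps have well-defined starting vertices and the $i \le \min\{k\,|\,v_{k,j}\in V_j'\}$ and $i \ge \max\{k\,|\,v_{k,j}\in V_j'\}$ branches in \eqref{eq:general_righthv_update} and \eqref{eq:general_lefthv_update} are consistent with the actual reachability encoded by $\texttt{v}$. Once this bookkeeping is in place, the rest of the argument is a routine extension of Theorem \ref{thm:3dmtmnl_time}.
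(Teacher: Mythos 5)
Your proposal is correct and follows essentially the same route as the paper: row-by-row dynamic programming with the three states $\overrightarrow{\texttt{h}}$, $\overleftarrow{\texttt{h}}$, $\texttt{v}$, updating $\texttt{v}$ from the previous row first and then sweeping the two horizontal states inward from the extremal vertices of $V_j'$, with each state touched once in $O(1)$ time and the answer read off the last row. Your explicit identification of $V_j'$ with the vertices whose $\texttt{v}$ value is finite, and the back-pointer reconstruction, are details the paper leaves implicit but are consistent with its argument.
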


\subsection{3D-MEC Problem}

\subsubsection{No left turns}

Consider the set $S_{i, j, k}$ of all paths $\PP_G$ connecting vertex $O$ with vertex $v_{i,j}$ with exactly $k$ turns such that 
$\{v_{i-1, j}; v_{i, j}\}\subseteq \PP_G$, and define  $\texttt{H}[i, j, k]=\max(\sum_{p\in \vset{P}}w_p \text{, for all } \PP_G \in S_{i, j, k}$). 
Let $\texttt{V}[i,j,k]$ to be defined analogously, but 
enforcing
$\{v_{i, j-1}; v_{i, j}\}\subseteq \mathcal{P}_G$. As no left turns are allowed, both tables can be filled with dynamic programming using similar ideas to the ones defined in Section \ref{sssection:3d-mtm-nl}. We distinguish different cases when filling the tables $\texttt{H}$ and $\texttt{V}$.

\begin{itemize}
    \item \textit{Elements at the left border}. The elements in the left most column of $G$ cannot be reached horizontally, hence $\texttt{H}$ is ill defined in this scenario. As we are solving a maximization problem, we set this value to $-\infty$ except for $O$. On the other hand, elements in this column can only be reached with a path coming straight from $O$. The update rule for $\texttt{H}$ and $\texttt{V}$ is the following:
\end{itemize}
\begin{equation}
    \texttt{H}[0, j, k] = \begin{cases}
        0 & j = 0 \text{ and } k = 0 \\
        -\infty & e.o.c
    \end{cases}
\end{equation}
\begin{equation}
    \texttt{V}[0, j, k] = \begin{cases}
        -\infty & k>0 \\
        0 & j = 0 \text{ and } k=0 \\ 
        \texttt{V}[0, j-1, 0] + w_{0, j-1} & j > 0 \text{ and } k=0
    \end{cases}
\end{equation}

\begin{itemize}
    \item \textit{Elements at the bottom border}. The elements in the bottom row of $G$ cannot be reached vertically, hence $\texttt{V}$ is ill defined in this scenario except for $O$. On the other hand, the elements in this row can only be reached with paths that makes no turns, and the horizontal displacements add no cost when coming from $O$.
\end{itemize}
\begin{equation}
    \texttt{V}[i, 0, k] = \begin{cases}
        0 & i = 0 \text{ and } k=0 \\
        -\infty & e.o.c
    \end{cases}
\end{equation}
\begin{equation}
    \texttt{H}[i, 0, k] = \begin{cases}
        -\infty & k > 0 \\
        0 & e.o.c
    \end{cases}
\end{equation}

\begin{itemize}
    \item \textit{Elements not in the bottom or left border}. This case consider both $i>0$ and $j>0$. Elements of $G$ at this position cannot be reached by a path that makes no turns. When applying a horizontal (vertical) displacement we check the value of the vertex at the previous column (row), which has been already computed. Being $k$ the number of turns when filling the tables, we keep the same value for $k$ for a sequence of two horizontal (vertical) segments, but use $k-1$ when the sequence is vertical-horizontal (horizontal-vertical).
\end{itemize}
\begin{equation}
    \texttt{H}[i, j, k] = \begin{cases}
        -\infty & i> 0, j > 0, k = 0 \\
        \max(\texttt{H}[i-1, j, k], \texttt{V}[i-1, j, k-1]) & i> 0, j > 0, k > 0
    \end{cases}
\end{equation}
\begin{equation}
    \texttt{V}[i, j, k] = \begin{cases}
        -\infty & i> 0, j > 0, k = 0 \\
        \max(\texttt{H}[i, j-1, k-1], \texttt{V}[i, j-1, k]) + w_{i, j-1} & i> 0, j > 0, k > 0
    \end{cases}
\end{equation}

The aforementioned cases cover all possible scenarios for vertices of $G$. Notice that the solution to the 3D-MEC-NL problem is associated to the last row of the grid. Moreover, 
the solution to the 3D-MEC-NL problem is $\max\{\texttt{V}[i, n, m] \text{ $|$ } i\in [1, \ldots, n]\}$, being $n$ the number of rows and columns of $G$, and $m$ the maximum number of turns. 
As is typical in dynamic programming, the path corresponding to the optimal solution
can be retrieved from the matrices $\texttt{H}$ and $\texttt{V}$. Then, the following theorem holds.

\begin{theorem}
    3D-MEC-NL can be solved in $O(|V(G)|m)$.
\end{theorem}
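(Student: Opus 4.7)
The plan is a standard dynamic-programming correctness-and-runtime argument organized around the recurrences already set up above. I would prove, by induction on the level $i+j$, that $\texttt{H}[i,j,k]$ and $\texttt{V}[i,j,k]$ equal the maximum energy $\sum_{p \in \vset{\PP_G}} w_p$ taken over no-left-turn paths from $O$ to $v_{i,j}$ that use exactly $k$ turns and whose final edge is, respectively, horizontal from $v_{i-1,j}$ or vertical from $v_{i,j-1}$. This is precisely the set $S_{i,j,k}$ and its $\texttt{V}$-analogue defined earlier.

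For the base cases, the border initializations are forced: on the bottom row, the only feasible paths from $O$ are purely horizontal with zero turns and no vertical points, so $\texttt{H}[i,0,0]=0$ and the other entries are infeasible; on the left column, the only feasible paths go straight up, producing a single vertical segment whose interior vertices contribute the weights $w_{0,0}+\cdots+w_{0,j-1}$. For the inductive step, I would observe that any no-left-turn path reaching $v_{i,j}$ along a horizontal edge either continues a horizontal segment (the predecessor sub-path ends horizontally at $v_{i-1,j}$ with the same turn count $k$) or has just turned from a vertical segment (the predecessor ends vertically at $v_{i-1,j}$ with $k-1$ turns). These two cases are exhaustive and mutually exclusive because left turns are banned, and they match the two branches of the recurrence for $\texttt{H}$ exactly; the analogous decomposition justifies the recurrence for $\texttt{V}$.

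The delicate point, and what I expect to take the most care to state precisely, is the weight bookkeeping mandated by Definition \ref{def:turn_and_vertical_points}. A vertex on a vertical segment contributes to $\vset{\PP_G}$ only if it is not the topmost vertex of that segment. Hence, when the DP extends vertically from $v_{i,j-1}$ to $v_{i,j}$, it is exactly $v_{i,j-1}$ that becomes a committed vertical point, and its weight $w_{i,j-1}$ is added; the new endpoint $v_{i,j}$ is deferred, to be counted only if and when the vertical segment is extended at a later step via another $\texttt{V}$-transition. The $\texttt{H}$ recurrence correctly adds no weight because horizontal traversals never create vertical points. The announced answer $\max_i \texttt{V}[i,n,m]$ is then justified by noting that any optimal path ending with a horizontal run along the top row can be truncated at its last vertical step without decreasing the collected energy, since those trailing horizontal vertices are top-of-segment and therefore not elements of $\vset{\PP_G}$.

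For the runtime, the tables $\texttt{H}$ and $\texttt{V}$ together have $O(n^2 \cdot m) = O(|V(G)| \cdot m)$ entries, each evaluated in $O(1)$ time by a recurrence. Standard back-pointer recovery of the optimal path adds $O(n+m)$ extra time, which is absorbed into the bound. This yields the claimed $O(|V(G)| \cdot m)$ complexity.
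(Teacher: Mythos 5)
Your proposal is correct and follows essentially the same approach as the paper: the same tables $\texttt{H}$ and $\texttt{V}$ indexed by position and turn count, the same recurrences with the weight $w_{i,j-1}$ charged when a vertical step commits $v_{i,j-1}$ as a vertical point, the same extraction of the answer from the last row, and the same $O(|V(G)|\,m)$ table-size count. In fact you supply the inductive correctness argument more explicitly than the paper does, which states the recurrences and asserts the theorem without a detailed proof.
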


\subsubsection{The general case}

As for 3D-MTM problem, Remark \ref{rmk:left_or_right_in_row} is valid for the 3D-MEC problem: any horizontal portion of a path either goes left or right. As there are no cycles within a row, computing the optimal path reaching a vertex $v_{i,j}$ with $k$ turns can be divided into two matrices $\overrightarrow{\texttt{H}}[i,j,k]$ and $\overleftarrow{\texttt{H}}[i,j,k]$. The rules for updating $\overrightarrow{\texttt{H}}$ and $\overleftarrow{\texttt{H}}$ are similar to the ones defined in the previous case for $\texttt{H}$. Notice that:
\begin{itemize}
    \item $\overrightarrow{\texttt{H}}[i,j,k]$ depends on $\overrightarrow{\texttt{H}}[i-1,j,k]$ and $\texttt{V}[i-1,j,k-1]$.
    \item $\overleftarrow{\texttt{H}}[i,j,k]$ depends on $\overleftarrow{\texttt{H}}[i+1,j,k]$ and $\texttt{V}[i-1,j,k-1]$.
    \item $\texttt{V}[i,j,k]$ depends on $\overrightarrow{\texttt{H}}[i,j-1,k-1]$, $\overleftarrow{\texttt{H}}[i,j-1,k-1]$ and $\texttt{V}[i,j-1,k]$.
\end{itemize}
For $\overleftarrow{\texttt{H}}$ we must consider initializing vertices in the right border instead of vertices in the left border. In addition, for row 0 $\overleftarrow{\texttt{H}}$ is set to $-\infty$. With values of $\overrightarrow{\texttt{H}}$, $\overleftarrow{\texttt{H}}$ and $\texttt{V}$ for row 0 we can update $\texttt{V}$ for row 1. In general, the update process per row is: 1) update $\texttt{V}$, 2) update $\overrightarrow{\texttt{H}}$, 3) update $\overleftarrow{\texttt{H}}$. As the update process is per row, and $\texttt{V}$ depends on values from the previous row, $\texttt{V}$ is defined correctly. On the other hand, $\overrightarrow{\texttt{H}}$ depends on values from the previous column. As values for $\texttt{V}$ are already computed, and the values of $\overrightarrow{\texttt{H}}$ for column 0 are part of the base case, then $\overrightarrow{\texttt{H}}$ can be computed for any column in the current row. A similar idea is applied to the case of $\overleftarrow{\texttt{H}}$.

As we have shown, the general case is not more complex than the no left turn case in terms of worst computing time. With just one additional matrix, 3D-MEC problem can be solved using a similar approach than the one defined for 3D-MEC-NL. Therefore, the following theorem holds.

\begin{theorem}
    3D-MEC can be solved in $O(|V(G)|m)$.
\end{theorem}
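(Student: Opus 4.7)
The plan is to adapt the dynamic programming framework of 3D-MEC-NL by refining the horizontal table $\texttt{H}$ into two directional tables $\overrightarrow{\texttt{H}}$ and $\overleftarrow{\texttt{H}}$, as already sketched in the paragraph above the theorem, and to verify three things: (i) the recurrences correctly capture all optimal paths that are legal in the general case, (ii) the evaluation order is acyclic so that each entry can be filled in constant time, and (iii) the total work is $O(|V(G)|m)$. The output value is read as $\max_i \max(\texttt{V}[i,n,m], \overrightarrow{\texttt{H}}[i,n,m], \overleftarrow{\texttt{H}}[i,n,m])$ over the top row (possibly taking the max over all $k \le m$ as well).

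For correctness I would start from Remark \ref{rmk:left_or_right_in_row}, which asserts that a restricted generalized collector path traverses each non-final row monotonically in one horizontal direction before turning vertical. Consequently, at any vertex $v_{i,j}$ the last edge of a prefix of $\mathcal{P}_G$ is either a rightward horizontal edge, a leftward horizontal edge, or a vertical edge, which justifies the three-table split. Each recurrence then expresses the only two ways to reach $v_{i,j}$ with exactly $k$ turns from that direction: extend the same directional segment (no new turn, so the $k$-index is preserved) or arrive from the orthogonal orientation (consuming one turn, so index $k-1$). The energy contribution $w_{i,j-1}$ is added precisely once, when the vertex $v_{i,j-1}$ becomes a vertical point, which matches Lemma \ref{Lemma: second energy simplification}.

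For the evaluation order I would fill the tables row by row, from $j=0$ upward, using the base cases inherited from the 3D-MEC-NL analysis (with $\overleftarrow{\texttt{H}}[\cdot,0,\cdot]=-\infty$ since the SCA starts at $O=(0,0)$ and cannot enter the bottom row from the right). Within a row $j$ one processes, for each fixed $k$: first $\texttt{V}[\cdot,j,k]$ (which depends only on row $j-1$), then $\overrightarrow{\texttt{H}}[\cdot,j,k]$ in order of increasing column (which depends on $\texttt{V}[\cdot,j,k-1]$ already available), then $\overleftarrow{\texttt{H}}[\cdot,j,k]$ in order of decreasing column. This ordering is well-defined and each of the $O(|V(G)|m)$ entries is computed in constant time, yielding the stated bound; reconstruction of the actual path is standard traceback.

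The main obstacle I anticipate is not the raw counting but a careful argument that restricting horizontal moves within a row to be monotone in a single direction does not discard any optimal solution: one must rule out paths that visit a row, step up to the next row, come back down, and re-traverse the first row in the opposite direction. Such a configuration would either revisit a cell (which is disallowed since every static point corresponds to a distinct vertex under our path definition) or waste turns without gaining new vertical points, so it can be excised or re-routed without decreasing $E(\mathcal{P}_G)$ and without increasing the turn count. Formalising this exchange argument from the GCP axioms in Definition \ref{def:GCP} and from Remark \ref{rmk:left_or_right_in_row} is where I would spend most of the proof's effort.
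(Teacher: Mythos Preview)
Your proposal is correct and follows essentially the same approach as the paper: split the horizontal table into $\overrightarrow{\texttt{H}}$ and $\overleftarrow{\texttt{H}}$, fill the three tables row by row in the order $\texttt{V}$, $\overrightarrow{\texttt{H}}$ (increasing column), $\overleftarrow{\texttt{H}}$ (decreasing column), and count $O(|V(G)|m)$ entries each filled in constant time. The obstacle you anticipate in your final paragraph is a non-issue: condition b.4 of Definition~\ref{def:GCP} enforces $t_i \ge t_{i-1}$, so a GCP can never descend to a previously visited row, and conditions b.2--b.3 force strict alternation of horizontal and vertical segments, which is exactly what yields Remark~\ref{rmk:left_or_right_in_row}; no separate exchange argument is needed.
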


\section{Experiments}
\label{sec:experiments}

We design a set of experiments to showcase the capabilities of our algorithms. First, we obtain the irradiance function with the use of raytracing software and public databases. Then we select two different geometries for the Solar Collector Element, one to account for perfect conditions, and other one simulating an SCE with a poor shape. The proposed solutions for the 3D-MTM and 3D-MEC problems are implemented in these scenarios using python 3.9.
As the computational complexity is similar but we are interested in realistic scenarios, the experiments described hereinafter only consider the case where no left turns are allowed.

The software used in this study to perform the raytracing simulations is Tonatiuh++, an open source Monte-Carlo ray tracer for modelling the light collection and concentration system of Solar Concentrating Thermal (CST) systems developed by \cite{TonatiuhPP}. The simulated SCE has a length of 12m, a diameter of 4m and a focus point of the parabola at the position (0, 1.5, 0). The material of the mirror has a reflectivity of 0.95 with a Gaussian distribution slope error of 2mrad. The shape of the sun was chosen as a Buie shape with a circumsolar ratio of 2\%. All boundary conditions are shown in Table \ref{tab:boundary_conditions}.

\begin{table}[]
    \centering
    \begin{tabular}{c c}
        \hline
        \multicolumn{2}{c}{\textbf{Geometry properties}} \\
        \hline
        Parabola length & 12m \\
        Parabola diameter & 4m \\
        Tube length & 12m \\
        Tube diameter & 0.1m \\
        Focus point & y=1.5m \\
        \hline
        \multicolumn{2}{c}{\textbf{Optical properties}} \\
        \hline
        Mirror reflectivity & 0.95 \\
        \(\sigma_{\text{opt}}\) & 2mrad \\
        Sun shape & Buie \\
        Circumsolar ratio & 2\% \\
        Number of rays & $10^6$ \\
        \hline
        \multicolumn{2}{c}{\textbf{Location}} \\
        \hline
        Longitude &  $37^\circ$ 24' 42'' \text{N} \\
        Latitude &   $6^\circ$ 0' 21'' \text{W} \\
        \hline
    \end{tabular}
    \caption{Specifications for the raytracing simulations}
    \label{tab:boundary_conditions}
\end{table}

To identify the minimum number of rays that need to be simulated for each experiment, a convergence study has been conducted. 
Three distinct design points from August 18, 2019, were selected to calculate the relative error of the raytracing results based on the number of simulated sun rays.
The reference values for calculating the relative errors are based on simulations conducted using 300 million sun rays.
The relative errors of all three design points are shown in Figure \ref{fig:relativeerror}. 

It can be seen that to reach an absolute value of the relative error of $ \leq 1\%$, a minimum of 1 million sun rays need to be simulated. 
It is important to note that for highly accurate raytracing simulations, a relative error of $ \leq 0.1\%$ is desirable. However, due to the large number of simulations required for this study, we opted for a slightly higher acceptable error of $ \leq 1\%$ to ensure reasonable run-times.

\begin{figure}
    \centering
    \includegraphics[width=1\textwidth]{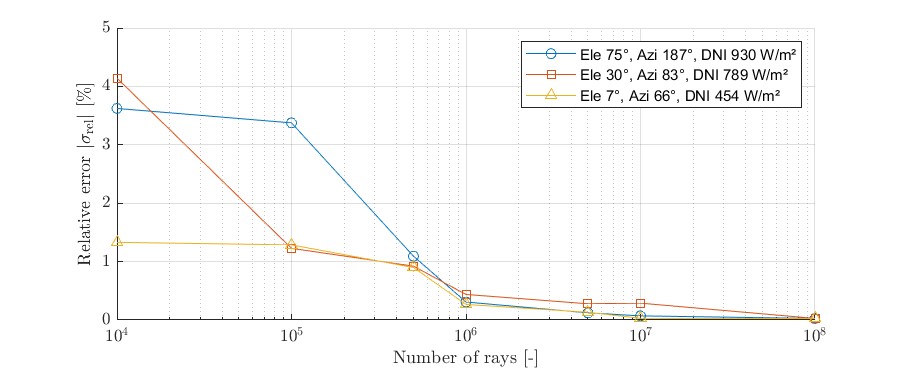}
    \caption{Relative error for different design points}
    \label{fig:relativeerror}
\end{figure}

Two possible mirror setups were investigated: 1) the mirrors composing the SCE have a perfect alignment, i.e. there are no errors in its parabolic shape; and 2) the mirrors are subject to minor displacements out of their optimal position. In the second setup the SCE is composed of 28 slightly distorted mirrors (Figure \ref{fig:picture_setup}) to consider possible misalignments in the SCE caused by the ageing of the components, and the operation of the solar field.

\begin{figure}
    \centering
    \includegraphics[width=.9\textwidth]{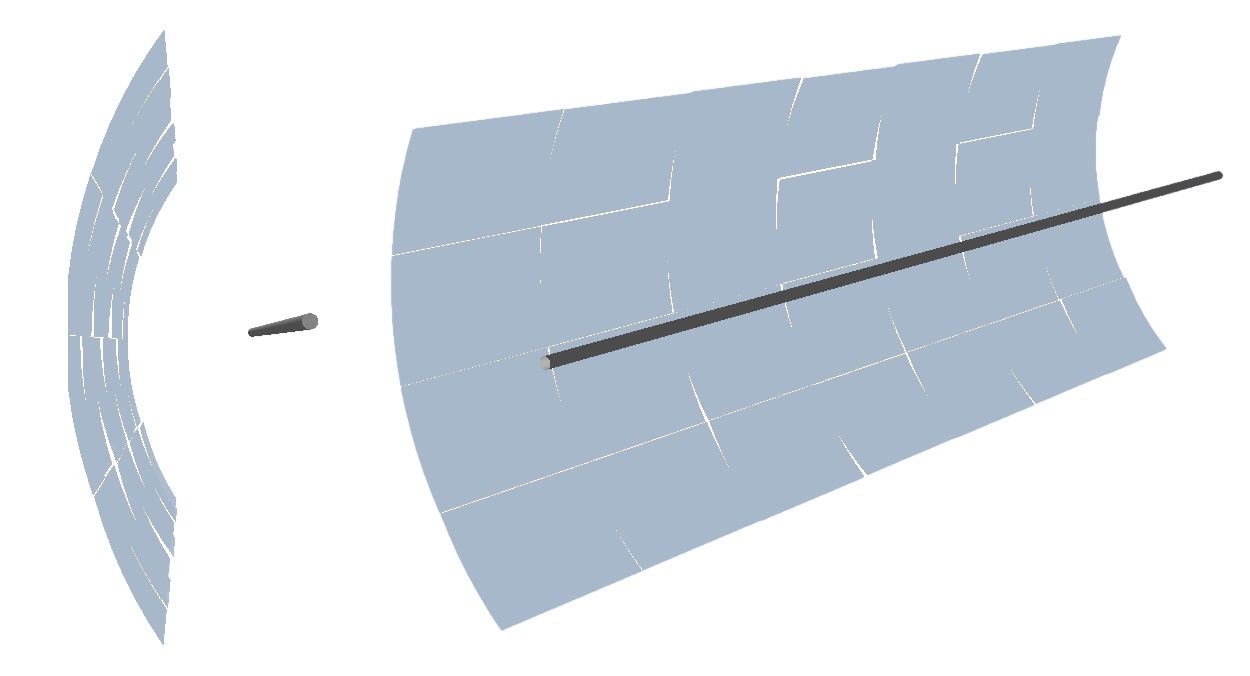}
    \caption{Geometry of setup 2) with displaced mirrors. Side view (left), front view (right)}
    \label{fig:picture_setup}
\end{figure}

The simulations are conducted with a fixed position of the SCE and the sun changing its position in 1-minute time steps. The angles of the SCE rotate between $[10\degree, 170\degree]$ 
with a step of $\alpha=1^\circ$ for the case study 1) of the SCE with a perfect alignment. The case study 2) of displaced mirrors composing the SCE required smaller steps of changing the SCE angle to achieve accurate results and was therefore conducted with a step of $\alpha=0.2^\circ$. The whole range of motion of the SCE was passed for each minute of the selected days for this study. 

To account for realistic positions of the Sun while moving over the SCE, we use the information provided in \cite{SunPositionData}. This project provides the Direct Normal Irradiation (DNI) data for a given day with the selected time step. To find potential differences between cloudy and sunny days, we classify these two different cases through the daily DNI functions. By looking at the shape of the DNI function we can identify a sunny or cloudy day, see Figure \ref{fig:sunny_cloudy_example}. Notice we cannot classify sunny or cloudy days based on high DNI values because DNI changes throughout the year. We selected five sunny and five cloudy days corresponding to the year of 2019. 
All DNI values for all chosen days were collected through ground and satellite measurements that are provided in the National Solar Radiation Database 
\cite{IrradiancedataNREL}.

\begin{figure}
    \centering
    \begin{subfigure}{0.47\textwidth}
        \includegraphics[width=\textwidth,trim={15cm 3cm 20cm 7cm},clip]{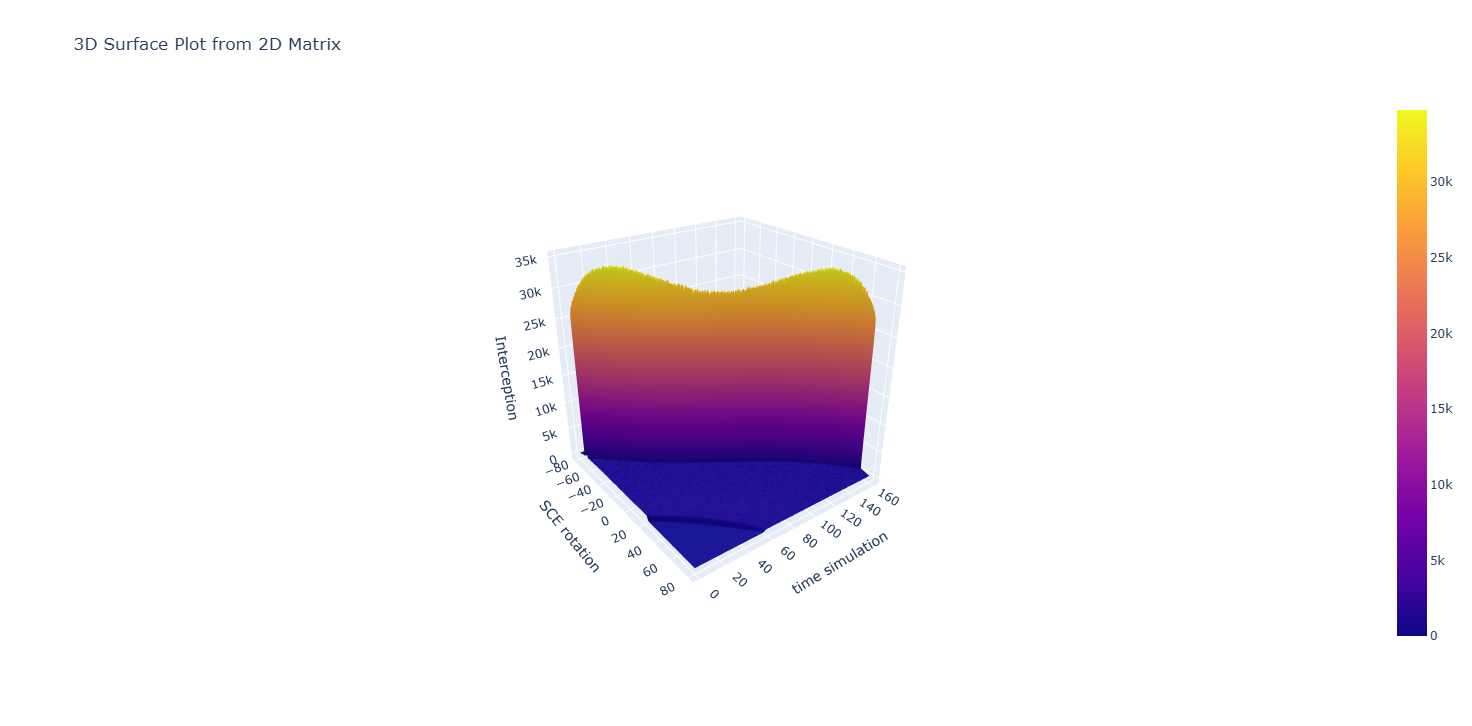}
        \caption{}
    \end{subfigure}
    \hfill
    \begin{subfigure}{0.47\textwidth}
        \includegraphics[width=\textwidth,trim={15cm 3cm 20cm 7cm},clip]{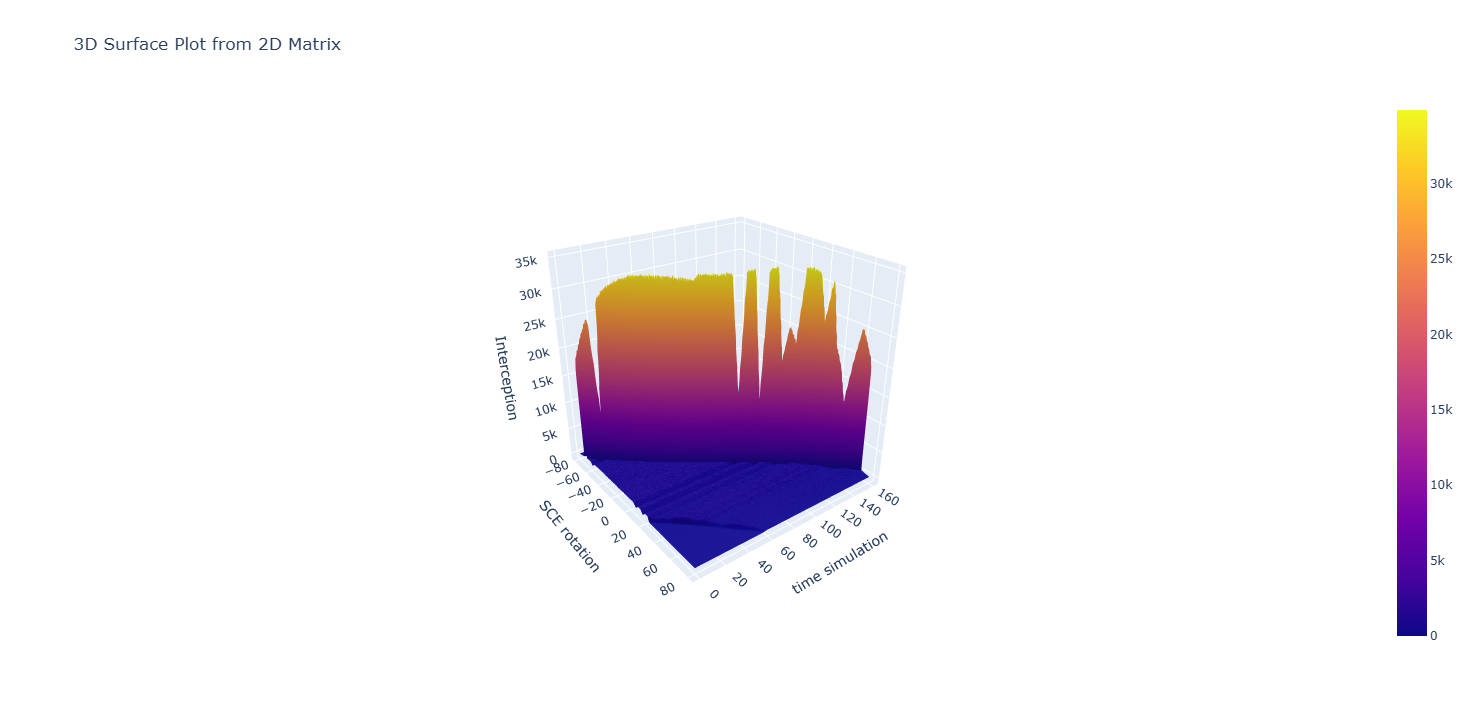}
        \caption{}
    \end{subfigure}
    \caption{Example of the irradiance function in (a) sunny, and (b) cloudy days. }
    \label{fig:sunny_cloudy_example}
\end{figure}

\subsection{3D-MTM}

The goal of this test is to observe the impact of changing the threshold for energy collection under different conditions. To this end, for any day $d$, we fix the value of $u_2$ as the maximum energy collected during $d$. Then we change the value of $u_1$ in the range of $[0, 15000]$ with a step of 200. Our results are depicted in Figure \ref{fig:3d_mtm_results}, from which we obtain several insights. We summarize them in the following bullet points:

\begin{figure}
    \centering
    \begin{subfigure}{\textwidth}
        \begin{subfigure}{0.49\textwidth}
        \includegraphics[width=\textwidth,trim={0 0 1.3cm 1.4cm},clip, height=4.3cm]{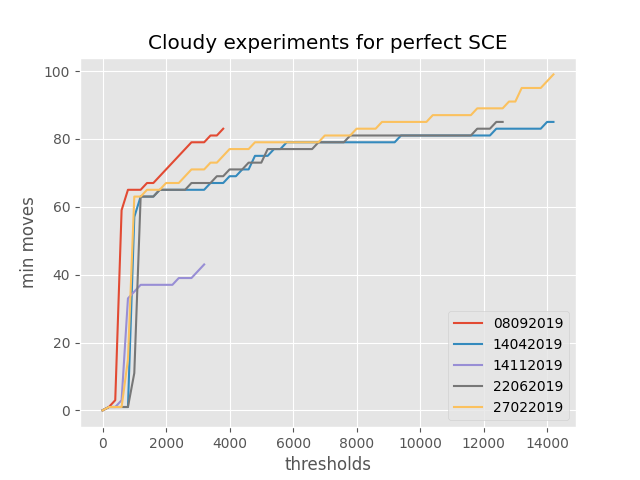}
    \end{subfigure}
    \hfill
    \begin{subfigure}{0.49\textwidth}
        \includegraphics[width=\textwidth,trim={0 0 1.3cm 1.4cm},clip, height=4.3cm]{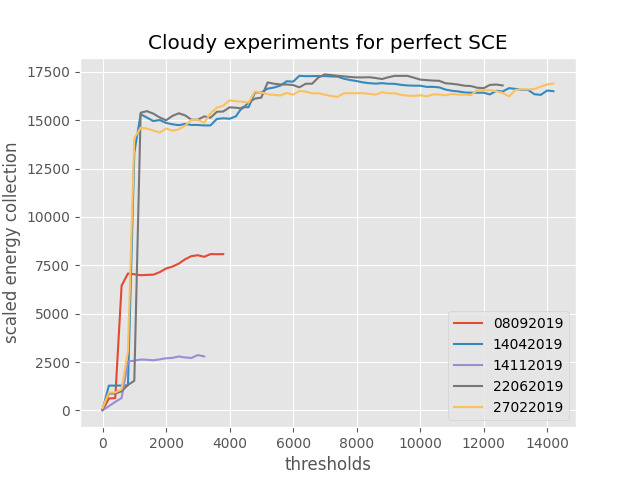}
    \end{subfigure}
    \caption{Cloudy experiments over perfect SCE}
    \end{subfigure}

    \begin{subfigure}{\textwidth}
        \begin{subfigure}{0.49\textwidth}
        \includegraphics[width=\textwidth,trim={0 0 1.3cm 1.4cm},clip, height=4.3cm]{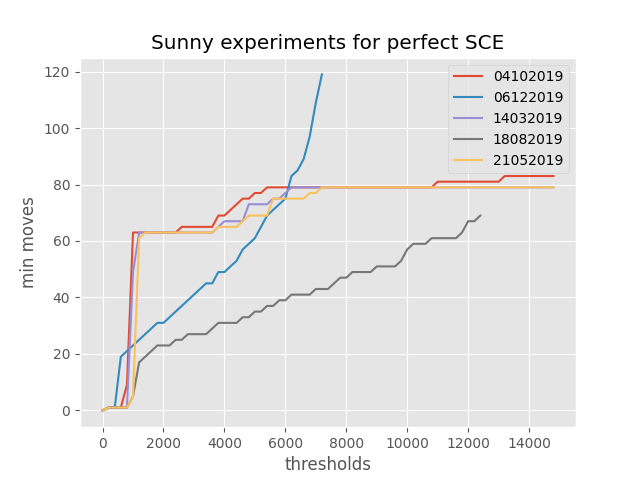}
    \end{subfigure}
    \hfill
    \begin{subfigure}{0.49\textwidth}
        \includegraphics[width=\textwidth,trim={0 0 1.3cm 1.4cm},clip, height=4.3cm]{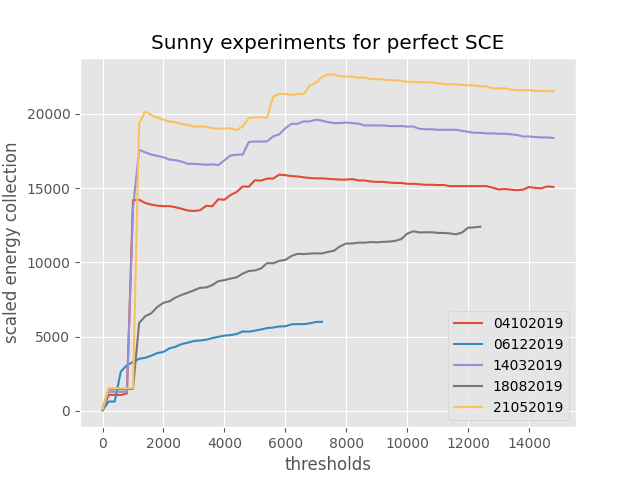}
    \end{subfigure}
    \caption{Sunny experiments over perfect SCE}
    \end{subfigure}
    
    \begin{subfigure}{\textwidth}
        \begin{subfigure}{0.49\textwidth}
        \includegraphics[width=\textwidth,trim={0 0 1.3cm 1.4cm},clip, height=4.3cm]{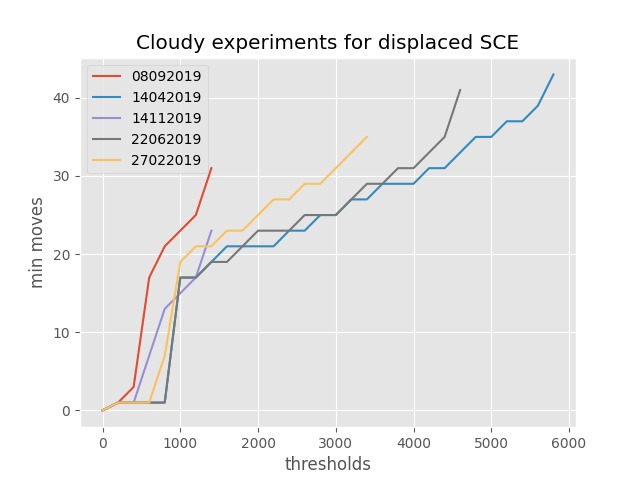}
    \end{subfigure}
    \hfill
    \begin{subfigure}{0.49\textwidth}
       \includegraphics[width=\textwidth,trim={0 0 1.3cm 1.4cm},clip, height=4.3cm]{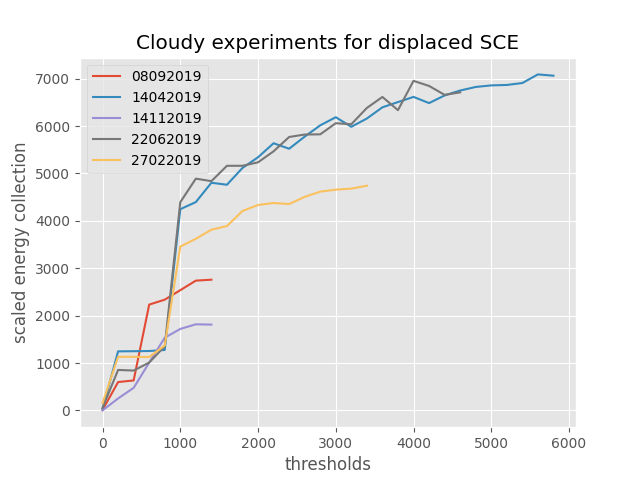}
    \end{subfigure}
    \caption{Cloudy experiments over displaced SCE}
    \end{subfigure}
    \begin{subfigure}{\textwidth}
        \begin{subfigure}{0.49\textwidth}
        \includegraphics[width=\textwidth,trim={0 0 1.3cm 1.4cm},clip, height=4.3cm]{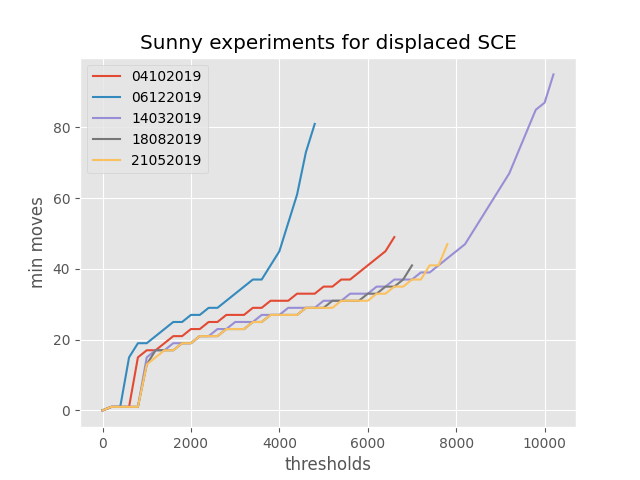}
    \end{subfigure}
    \hfill
    \begin{subfigure}{0.49\textwidth}
        \includegraphics[width=\textwidth,trim={0 0 1.3cm 1.4cm},clip, height=4.3cm]{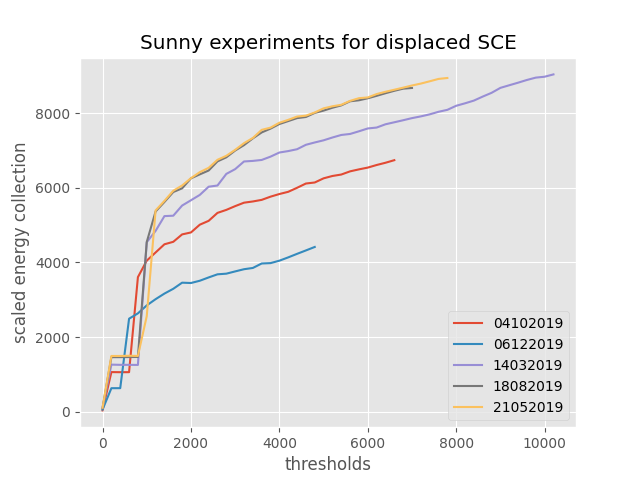}
    \end{subfigure}
    \caption{Sunny experiments over displaced SCE}
    \end{subfigure}
    \caption{3D-MTM results when changing the value of $u_1$. In every row the x-axis represent the threshold value while the $y$-axis represent the minimum number of movements of a R-GCP solving Problem \ref{problem:min_mov} (to the left), and the energy collected during the tracking (to the right). The energy value is divided by 1000 to reduce its range. 
    Each day $d$ is represented up to the maximum threshold such that the problem is solvable for the ray incidence function of $d$.
    }
    \label{fig:3d_mtm_results}
\end{figure}

\begin{itemize}
    \item The experiments conducted on the selected days of September, November (cloudy) and December (sunny) do not support energy capture thresholds above 4000, 3700, and 7800, respectively.  This implies that no feasible set of SCE rotations can ensure sufficient energy capture on these days. For the remaining experiments, energy capture levels at any time can exceed 12000.
    \item For experiments admitting high threshold values (above 12000) in Figures \ref{fig:3d_mtm_results}(a) and (b) we can observe a plateau in the graphics, excluding August 18, in Figure \ref{fig:3d_mtm_results}(b). This may suggest that the total number of rotational movements of the SCE can be reduced with minimal impact on total energy capture.
    \item 
    Experiments with displaced mirrors in the SCE show a lower energy capture, as expected, due to less efficient concentration of rays onto the absorber tube. Notably, neither energy collection nor the number of rotations reached a plateau in any of these trials, suggesting that energy capture efficiency might improve either by increasing rotational movements or by optimizing the timing of rotations. As demonstrated in the following section, energy capture can be significantly enhanced by a few additional movements performed at optimal times.
\end{itemize}

\subsection{3D-MEC}

The goal of this test is to observe the impact of changing the number of movements for energy collection under different conditions. When computing the irradiance function for the displaced HCE a higher number of displacement were considered; hence we set the maximum value of $m$ for this scenario in $360$, and $180$ for the perfect SCE. Our experiments demonstrate that is not required a higher range; see Figure \ref{fig:3d_mec_results}.  An interesting observation is that above 150 moves not much higher amount of energy is collected, independently of the evaluated scenario. This indicates that some rotational movements from the Solar Collector Assembly can be reduced, which might be beneficial to reduce the stress in the Ball Joint structure. We will conduct a more in-depth experiment in what follows, obtaining the exact number of rotations that can be reduced while capturing at least $95\%$ of the solar energy.

\begin{figure}[t]
    \centering
    \begin{subfigure}{.49\textwidth}
       \includegraphics[width=\textwidth,trim={0 0 1.3cm 1.4cm},clip, height=4.5cm]{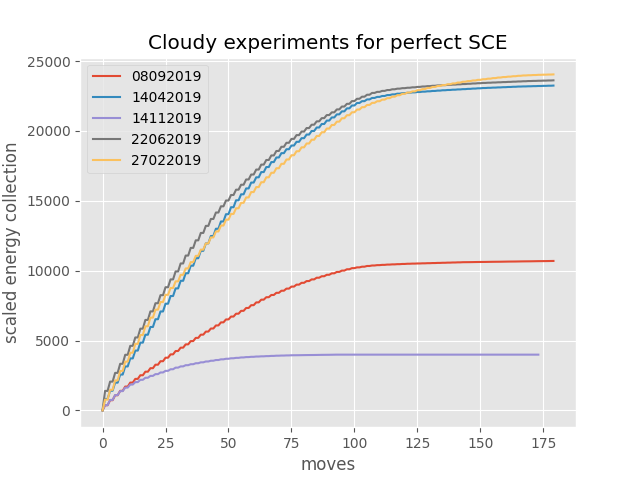}
    \caption{Cloudy experiments over perfect SCE}
    \end{subfigure}
    \hfill
    \begin{subfigure}{.49\textwidth}
       \includegraphics[width=\textwidth,trim={0 0 1.3cm 1.4cm},clip, height=4.5cm]{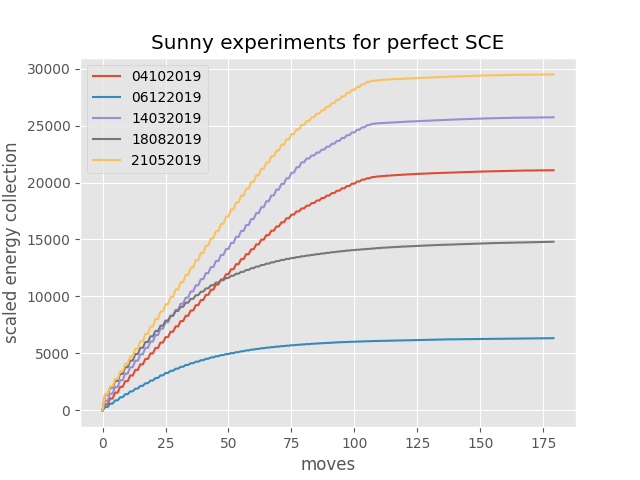}
    \caption{Sunny experiments over perfect SCE}
    \end{subfigure}

    \centering
    \begin{subfigure}{.49\textwidth}
       \includegraphics[width=\textwidth,trim={0 0 1.3cm 1.4cm},clip, height=4.5cm]{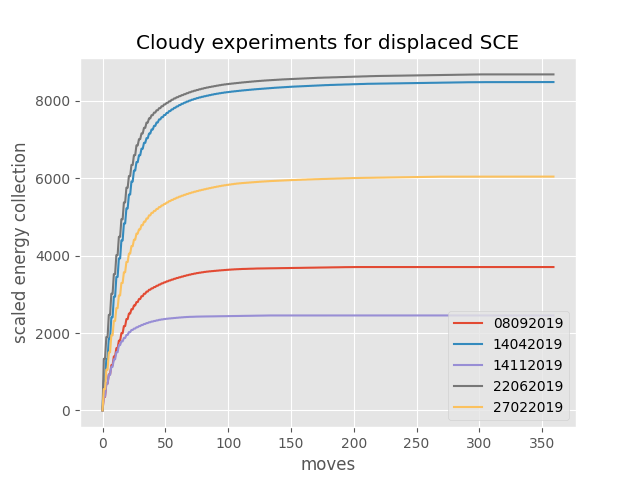}
    \caption{Cloudy experiments over displaced SCE}
    \end{subfigure}
    \hfill
    \begin{subfigure}{.49\textwidth}
       \includegraphics[width=\textwidth,trim={0 0 1.3cm 1.4cm},clip, height=4.5cm]{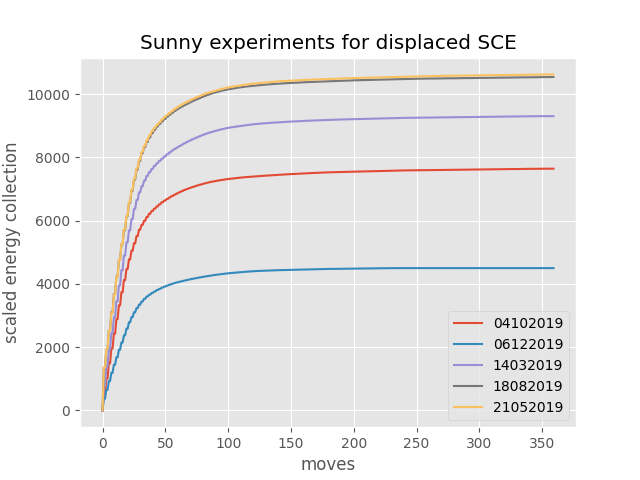}
    \caption{Sunny experiments over displaced SCE}
    \end{subfigure}
    \caption{3D-MEC results when increasing the number of movements across different days. $x$-axis represent the total number of movements of a R-GCP solving Problem \ref{problem:max_energy}, and $y$-axis the energy collected during the tracking period. The energy value is divided by 1000 to reduce its range.}
    \label{fig:3d_mec_results}
\end{figure}

\subsubsection{Adapting to short-time forecasting}

Our algorithms assume the irradiance function is known for the whole day. However, this is not the case when forecasting methods consider a smaller prediction range; see the work of \cite{kumari2021deep} for a comprehensive study. 

We are interesting in evaluating our methods assuming only a portion of the whole function is known. To this end, we design the following test: 1) given an irradiance function, split it into $k$ portions, 2) evaluate our 3D-MEC solution using $m$ steps in each of the $k$ intervals. We compare our results with the optimal solution for $m$ moves using the whole function as input. As the number of total movements is smaller for this case, it is expected to obtain a higher (but similar, according to Figure \ref{fig:3d_mec_results}) energy. We are also interested in computing the number of movements necessary to keep each of the $k$ intervals above the $95\%$ of the optimal solution for that interval. This is important to reduce the stress applied over the Ball Joint Assembly. Finally, we obtain the total computing time of our algorithms, which must also be considered when deciding the length of the forecasting period.

We consider the case where $k=2$ and $m=120$. In addition, the irradiance functions are scaled by dividing their values by 1000. For the perfect and displaced SCE, we consider the sunny and the cloudy experiments, respectively. Tables \ref{tab:fi_perfectSCE_sunny} and \ref{tab:fi_displacedSCE_cloudy} depict the results obtained in these scenarios.

Both for the perfect and the displaced SCE, it can be observed that using a shorter forecasting period results in higher energy collection at the cost of increasing the total number of rotations. However, it is remarkable that the computational effort is significantly reduced, as solving the problem twice for half a day each results in halving the overall processing time.

By examining the minimum number of rotations needed to collect at least $95\%$ of the solar energy within each forecasting interval, our results underscore the importance of solving our target problems. We compare the total number of movements across three scenarios: (1) the entire day ($m=120$), (2) a half-day forecasting interval, and (3) achieving 95\% energy collection within the half-day forecasting interval. Specifically, we calculate the percentage reduction in total movements in scenario 3 compared to scenarios 1 and 2. For an ideal SCE, the total number of rotations in scenario 3 represents a reduction of at least $44\%$ relative to scenario 2, and a reduction of at least $10\%$ relative to scenario 1. For a distorted SCE, the reduction in rotations is at least $63\%$ compared to scenario 2, and at least $27\%$ compared to scenario 1.

\begin{table}[t]
    \centering
    \begin{tabular}{|c|c|c|c|c|c|c|c|}
    \hline 
    \multirow{2}{2em}{Date} & \multicolumn{2}{c|}{Whole day} & \multicolumn{3}{c|}{$2$ FI}& \multicolumn{2}{c|}{$2$ FI (95\% energy)} \\ 
    \cline{2-8}
    & MEC & Time (s) & MEC & Moves & Time & $95\%$ Energy & Moves \\ \hline

    06122019 & 6145 & 10.17 & 6270 & 179 & 4.96 & 5979 & 99 \\ \hline

    18082019 & 14380 & 15.89 & 14815 & 239 & 7.40 & 14115 & 107 \\ \hline
    
    04102019 & 20690 & 11.83 & 20855 & 189 & 5.65 & 20020 & 101 \\ \hline

    14032019 & 25325 & 12.40 & 25445 & 181 & 6.02 & 24260 & 98 \\ \hline

    21052019 & 29110 & 15.11 & 29195 & 180 & 6.96 & 27925 & 97 \\ \hline

    \end{tabular}
    \caption{Results for the 3D-MEC problem applied over different sunny days, considering a perfect SCE and $m=120$. Columns related to the forecasting intervals are indicated with the acronym \textbf{FI}. The same value of $m\le120$ is considered on every forecasting interval.}
    \label{tab:fi_perfectSCE_sunny}
\end{table}

\begin{table}[t]
    \centering
    \begin{tabular}{|c|c|c|c|c|c|c|c|}
    \hline 
    \multirow{2}{2em}{Date} & \multicolumn{2}{c|}{Whole day} & \multicolumn{3}{c|}{$2$ FI}& \multicolumn{2}{c|}{$2$ FI (95\% energy)} \\ 
    \cline{2-8}
    & MEC & Time (s) & MEC & Moves & Time & $95\%$ Energy & Moves \\ \hline

    14112019 & 2447 & 70.00 & 2447 & 133 & 33.63 & 2344 & 45 \\ \hline

    08092019 & 3664 & 70.21 & 3698 & 197 & 33.74 & 3526 & 71 \\ \hline

    27022019 & 5897 & 69.48 & 5713 & 221 & 34.65 & 5442 & 75 \\ \hline

    14042019 & 8295 & 69.33 & 8435 & 239 & 33.84 & 8036 & 71 \\ \hline

    22062019 & 8502 & 80.07 & 8635 & 239 & 40.19 & 8226 & 69 \\ \hline

    \end{tabular}
    \caption{Results for the 3D-MEC problem applied over different cloudy days, considering a displaced SCE and $m=120$. Columns related to the forecasting intervals are indicated with the acronym \textbf{FI}. The same value of $m\le120$ is considered on every forecasting interval.}
    \label{tab:fi_displacedSCE_cloudy}
\end{table}

\section{Conclusions}
\label{sec:conclusions}

In this work, we addressed the challenge of reducing the rotational movements of Solar Collector Assemblies in Parabolic Trough solar fields while maintaining high levels of energy capture. Our research focused on two key problems: (1) minimizing the number of SCA rotations without compromising energy production and (2) maximizing energy capture when the number of allowable rotations is constrained. We provided a novel framework that transforms these continuous tracking challenges into path optimization problems on rectangular grids, allowing for polynomial-time solutions that are adaptable to varying weather conditions.

Our experimental results, based on realistic simulations and real-world data, demonstrate that SCA rotations can be reduced by at least 10\% while ensuring that energy collection remains above 95\% efficiency. This reduction in mechanical movements offers significant benefits, particularly in extending the operational lifespan of the solar tracking systems in PT plants, which are prone to wear and failure due to continuous use.

The robustness of our algorithms, which can integrate solar irradiance forecasting methods, ensures that our approach is well-suited for practical applications in diverse environmental conditions. These contributions mark a step forward in the sustainable operation of CSP plants by balancing energy efficiency with mechanical durability. Future research aims to extend these methods to different types of CSP technologies, consider different target functions, and explore more complex forecasting models for improved optimization.

\section*{Acknowledgments}

This work is partially supported by the grants PID2020-114154RB-I00, TED2021-129182B-I00, and DIN2020-011317 funded by MCIN/AEI/10.13039/501100011033 and the “European Union NextGenerationEU/PRTR".

\section*{Disclosure statement}

The authors declare that they have no known competing financial interests or personal relationships that could have appeared to influence the work reported in this paper.

\section*{Data availability statement}

Data is available on reasonable request from the authors.

\bibliographystyle{abbrv}
\bibliography{arxiv.bib}

\end{document}